\newtheorem{remark}{Remark}
\newtheorem{problem}{Problem}
\newtheorem{definition}{Definition}
\newtheorem{lemma}{Lemma}
\newtheorem{theorem}{Theorem}
\newtheorem{proposition}{Proposition}
\DeclareMathOperator*{\argmax}{arg\,max}
\newcommand{\doi}[1]{\href{http://dx.doi.org/#1}{\normalsize{\textsc{doi:}}~\nolinkurl{#1}}}
\newcommand{\arxiv}[1]{\href{http://arxiv.org/abs/#1}{\normalsize{\textsc{arxiv:}}~\nolinkurl{#1}}}
\renewcommand{\epsilon}{\varepsilon}
\let\originalleft\left
\let\originalright\right
\renewcommand{\left}{\mathopen{}\mathclose\bgroup\originalleft}
\renewcommand{\right}{\aftergroup\egroup\originalright}
\def\clap#1{\hbox to 0pt{\hss#1\hss}}
\definecolor{darkmidnightblue}{rgb}{0.0, 0.2, 0.4}
\DeclareMathAlphabet{\mathpzc}{OT1}{pzc}{m}{it}
\newcommand{\dfknote}[1]%
    {\textcolor{orange}{\textbf{DFK: #1}}}
\newcommand{\jlnote}[1]%
    {\textcolor{cyan}{\textbf{JL: #1}}}
\title{
Inifinite Horizon Reach-Avoid Zero-Sum Games via Deep Reinforcement Learning
}
\author{Jingqi Li, Donggun Lee, Somayeh Sojoudi, Claire J. Tomlin
\thanks{The authors are with the University of California, Berkeley. Correspondence to \href{mailto:jingqili@berkeley.edu}{\tt jingqili@berkeley.edu}.\newline \indent \color{black}This research is supported by an NSF CAREER award, the Air Force Office of Scientific Research (AFOSR), NSF's CPS FORCES and VeHICaL projects, the UC-Philippine-California Advanced Research Institute, the ONR BRC grant for Multibody Systems Analysis, a DARPA Assured Autonomy grant, and the SRC CONIX Center.}
\thanks{This is a preprint manuscript.}
}
\begin{document}

\maketitle

\begin{abstract}
In this paper, we consider the \emph{infinite-horizon reach-avoid zero-sum game} problem, where the goal is to find a set in the state space, referred to as the \emph{reach-avoid set}, such that the system starting at a state therein could be controlled to reach a given target set without violating constraints under the worst-case disturbance. We address this problem by designing a new Lipschitz continuous value function with a contracting Bellman backup, where the super-zero level set 
recovers the reach-avoid set. Building upon this, we prove that the proposed method can be adapted to compute the \emph{viability kernel}, or the set of states which could be controlled to satisfy given constraints, and the \emph{backward reachable set}, or the set of states that could be driven towards a given target set. Finally, we propose to alleviate the curse of dimensionality issue in high-dimensional problems by extending \emph{Conservative Q-Learning}, a deep reinforcement learning technique, to learn a value function such that the super-zero level set of the learned value function serves as a (conservative) approximation to the reach-avoid set. 

\end{abstract}
\section{Introduction}
\label{sec:introduction}

Ensuring the safety and performance of autonomous systems is essential for safety-critical systems, such as autonomous driving \cite{muhammad2020deep}, surgical robots \cite{thananjeyan2020safety}, and air traffic control \cite{hwang2007protocol}. 
Some safe control tasks could be modeled as driving the system's state to a target set in the state space while satisfying certain safety constraints \cite{bansal2017hamilton}. This is referred to as the \emph{reach-avoid} problem \cite{summers2011stochastic}. However, it is challenging to solve this problem under the presence of uncertainty \cite{dorato1987historical}, such as modeling errors and environmental disturbances. One way to accommodate this is to consider the reach-avoid problem under the worst-case scenario, which could be formulated as a zero-sum game between the control inputs and an adversary, accounting for the uncertainty or disturbance, with an objective to compromise the efforts of control inputs\cite{tomlin1998conflict}. 

In the general case, this problem is challenging because it requires solving a zero-sum game with nonlinear dynamics, where the objective involves the worst-case performance instead of an average or cumulative performance over time, as in most constrained reinforcement learning (RL) papers \cite{achiam2017constrained,chow2017risk,altman1999constrained}. Moreover, no future constraint violation is considered once the state trajectory enters the target set in the reach-avoid zero-sum game. Since constrained RL aims to learn a policy satisfying constraints at all times, it can only learn a conservative sub-optimal policy for the reach-avoid zero-sum game.


A well-known approach to solving the reach-avoid zero-sum game is the Hamilton-Jacobi (HJ) method \cite{margellos2011hamilton,fisac2015reach}, in which one can design a value function such that the sign of the value function evaluated at a state encodes the safety and performance information of that state. In addition, reach-avoid games with multiple agents \cite{selvakumar2019feedback} and stochastic systems \cite{vidal2002probabilistic} have also been studied. 
These methods provide a closed-loop control policy for the continuous-time and finite-horizon zero-sum game setting, where an agent is required to reach a target set safely within a given finite-time horizon. 

However, there are two major limitations of existing work on HJ-based methods for finite-horizon reach-avoid games. Firstly, for complex or uncertain dynamical systems, it is difficult to predict a sufficient time horizon to guarantee the feasibility of the finite-time reach-avoid zero-sum game problem. This motivates the formulation of the infinite-horizon reach-avoid zero-sum game, where we remove the requirement of reaching the target set within a pre-specified finite horizon. The infinite-horizon reach-avoid game has been studied in only a few prior works. Among these is \cite{hsusafety}, in which the reach-avoid problem without an adversary player is considered. The authors introduce the time-discount factor, a parameter discounting the impact of future reward and constraints, to the design of a contractive Bellman backup such that by annealing the time-discount factor to 1 they can obtain the reach-avoid set. Empirical results in \cite{hsusafety} suggest the potential of this method in solving the infinite-horizon reach-avoid zero-sum~games.

A second limitation is the computational complexity of classical HJ-based methods, which are exponential in the dimension of the state space \cite{bansal2017safe}. Several approaches have been proposed in the literature to alleviate this. A line of work \cite{mitchell2008flexible,landry2018reach,singh2018robust} formulate the reach-avoid zero-sum games as polynomial optimization problems for which there is no need to grid the entire state space. Another line of work on system decomposition, that is, decomposing a high-dimensional control problem into low-dimensional problems, is also a promising direction \cite{chen2016exact,chen2018decomposition}.
\textcolor{black}{Underapproximation of the reach-avoid set is proposed in \cite{gleason2017underapproximation} and an efficient open-loop policy method is developed in \cite{zhou2012general}. }
Nevertheless, these approaches presume certain structural assumptions about the dynamical systems and costs, which restrict their applications.

With the power of neural networks, deep RL has been proven to be a promising technique for high-dimensional optimal control tasks \cite{mnih2013playing,silver2014deterministic,haarnoja2018soft}, where a policy is derived to maximize the accumulated reward at each time step. It has been shown in \cite{munos1999gradient} that neural networks could be leveraged to approximate the value function in high-dimensional optimal control problems and therefore alleviate the curse of dimensionality of Hamilton-Jacobi reachability analysis. For example, sinusoidal neural networks is proven to be a good functional approximator for learning the value function of backward-reachable-tube problem for high-dimensional dynamical systems \cite{bansal2020deepreach}. A deep reinforcement learning-based method is proposed in \cite{fisac2019bridging} to learn a neural network value function for \emph{viability kernel}, the set of initial states from which a state trajectory could be maintained to satisfy pre-specified constraints. In particular, the paper \cite{fisac2019bridging} designs a contractive Bellman backup and learns a conservative approximation to the viability kernel. This method is further extended to solve infinite-horizon reach-avoid problem in \cite{hsusafety}.

In this paper, we propose a new HJ-based method for the infinite-horizon reach-avoid zero-sum game, and we develop a deep reinforcement learning method to alleviate the curse of dimensionality in solving it. Our contributions are threefold. We first propose a new value function for the reach-avoid zero-sum game, where the induced Bellman backup is a contraction mapping and ensures the convergence of value iteration to the unique fixed point. Subsequently, we analyze the designed value function by first proving its Lipschitz continuity, and then show that the new value function could be adapted to compute the viability kernel and the \emph{backward reachable set}, that is, the set of states from which a state trajectory could be controlled to reach the given target set. Finally, we alleviate the curse of dimensionality issue by proposing a deep RL algorithm, where we extend conservative Q-learning \cite{kumar2020conservative} to learn the value function of the infinite-horizon reach-avoid zero-sum game. We obtain upper and lower bounds for the learned value function. Our empirical results suggest that a (conservative) approximation of the reach-avoid set can be learned, even with neural network approximation.

Unlike the previous work \cite{hsusafety} where the value function is discontinuous in general and the reach-avoid set can only be obtained when the time-discount factor is annealing to 1, the proposed value function is Lipschitz continuous under certain conditions and can exactly recover the reach-avoid set without annealing the time-discount factor. The obtained reach-avoid set is agnostic to any time-discount factors in the interval $[0,1)$.
In addition, this new value function could be adapted to compute the viability kernel and backward reachable set, which constitutes a unified theoretical perspective on reachability analysis concepts such as reach-avoid set, viability kernel, and backward reachable set.

The rest of the paper is organized as follows. 
In Section \ref{sec:problem_formulation}, we formulate the infinite-horizon reach avoid problem. 
We present our main theoretical results in Sections \ref{sec:methods} - \ref{sec:pure_constraint_and_pure_target}, with proofs provided in the Appendix. In Section \ref{sec:experiments}, we illustrate our algorithm in multiple experiments. Finally, we conclude the paper in Section \ref{sec:conclusion}.

\section{Problem Formulation}
\label{sec:problem_formulation}
We denote the state of a system by $x\in \mathbb{R}^n$. Define $\mathbf{u}:=\{u_0,u_1,\dots\}$ and $\mathbf{d} := \{d_0,d_1,\dots\}$, where $u_t\in\mathcal{U}\subseteq \mathbb{R}^m$ and $d_t\in \mathcal{D}\subseteq \mathbb{R}^\ell$ are a sequence of control inputs and a sequence of disturbances, respectively. We assume $\mathcal{U}$ and $\mathcal{D}$ to be compact sets. Let $\xi_{x}^{\mathbf{u},\mathbf{d}}(t)$ be the state trajectory evaluated at time $t$, which evolves according to the update rule
\begin{equation}
\begin{aligned}
    \xi_x^{\mathbf{u},\mathbf{d}}(0) & = x,\\
    \xi_x^{\mathbf{u},\mathbf{d}}(t+1) &= f(\xi_x^{\mathbf{u},\mathbf{d}}(t), u_t,d_t),\ t\in\mathbb{Z}_+,
\end{aligned}
\end{equation}
where $\mathbb{Z}_+$ is the set of all non-negative integers and the system dynamics $f(\cdot,\cdot,\cdot ):\mathbb{R}^n\times \mathcal{U}\times \mathcal{D}\to \mathbb{R}^n$ is assumed to be Lipschitz continuous in the state.

Many control problems involving safety-critical systems can be interpreted as driving the system's state to a specific region while satisfying certain safety constraints. This intuition can be formalized by introducing the concept of \emph{target set} as the set of desirable states to reach as well as the concept of \emph{constraint set} as the set of all the states complying with the given constraints. The task is to design control inputs such that even under the worst-case disturbance the system state trajectory hits the target set while staying in the constraint set all times. We represent the target set and constraint set by $\mathcal{T}=\{x\in\mathbb{R}^n: r(x)>0\}$ and $\mathcal{C}=\{x\in\mathbb{R}^n: c(x)>0 \}$, respectively, for some Lipschitz continuous and bounded functions $r(\cdot): \mathbb{R}^n \to \mathbb{R}$ and $c(\cdot):\mathbb{R}^n\to \mathbb{R}$.

At each time instance $t\in\mathbb{Z}_+$, the control input $u_t$ aims to move the state towards the target set while staying inside the constraint set, whereas the disturbance $d_t$ attempts to drive the state away from the target set and the constraint set. In this work, we consider a conservative setting, where at each time instance the control input plays before the disturbance. 
We propose to account for this playing order by adopting the notion of \emph{non-anticipative strategy}:
\begin{definition}[Non-anticipative strategy]
A map $\phi: \{u_t\}_{t=0}^\infty\to \{d_t\}_{t=0}^\infty$, where $u_t\in\mathcal{U}$ and $d_t\in\mathcal{D}$, $\forall t\in\mathbb{Z}_+$, is a non-anticipative strategy if it satisfies the following condition: Let $\{u_t\}_{t=0}^\infty$ and $\{\bar{u}_t\}_{t=0}^\infty$ be two sequences of control inputs. Let $\{d_t\}_{t=0}^\infty=\phi(\{u_t\}_{t=0}^\infty)$ and $\{\bar{d}_t\}_{t=0}^\infty = \phi(\{\bar{u}_t\}_{t=0}^\infty)$. For all $T\ge 0$, if $u_t=\bar{u}_t$, $\forall t\in\{0,1,\dots,T\}$, then $d_t = \bar{d}_t$, $\forall t\in \{0,1,\dots, T\}$.
\end{definition}
The intuition behind this non-anticipative strategy is that, given two sequences of control inputs sharing the same values for the first $T$ entries, the corresponding entries of the two sequences of disturbance must also be the same, meaning the disturbance cannot use any information about the future control inputs when taking its own action. We denote by $\Phi$ the set of all non-anticipative strategies. Building upon the previous discussion on reach-avoid zero-sum games, we are now ready to define the definition the \emph{reach-avoid set}:
\begin{equation*}
    \begin{aligned}
    \mathcal{RA(T,C)} := &\{x\in\mathbb{R}^n: \forall \phi\in\Phi, \exists \{u_t\}_{t=0}^\infty \textrm{ and } T\ge 0,\\ &\textrm{s.t., } \forall t\in[0,T],\xi_{x}^{\mathbf{u},\phi(\mathbf{u})}(T) \in\mathcal{T}\wedge   \xi_{x}^{\mathbf{u},\phi(\mathbf{u})}(t)\in \mathcal{C}\},
    \end{aligned}
\end{equation*}
where 's.t.' is the abbreviation of the phrase 'such that'. As such, we formalize the problem to be studied in this paper below.
\begin{problem}[Infinite-horizon reach-avoid game]\label{Reach_avoid_problem}
 Let $\mathcal{T}:=\{x\in\mathbb{}{R}^n: r(x)>0\}$ and $\mathcal{C}:=\{x\in\mathbb{R}^n:c(x)>0\}$ be the target and the constraint set, respectively. Find the reach-aviod set $\mathcal{RA(T,C)}$ and an optimal sequence of control inputs $\{u_t\}_{t=0}^\infty$ for each state in $\mathcal{RA(T,C)}$ such that under the worst-case disturbance $\{d_t\}_{t=0}^\infty$ the state trajectory will reach the target set without violating the constraints. 
\end{problem}

\section{A New Value Function for Infinite-horizon Reach-Avoid Games}
In this section, we address Problem~\ref{Reach_avoid_problem} by introducing a new Lipschitz continuous value function such that its sign evaluated at a state indicates whether that state can be driven to the target set without violating constraints. To be more specific, we will show that the \emph{super-zero level set} of the designed value function recovers the reach-avoid set in Problem~\ref{Reach_avoid_problem}. Subsequently, we derive a Bellman backup equation. The fixed-point iteration based on this Bellman backup equation induces a unique fixed point solution, which will be proven to be the designed value function \eqref{eq:inf_horizon_reach_avoid_problem}. Finally, we will show that the designed value function is Lipschitz continuous, which is a favorable property for HJ analysis \cite{basco2019lipschitz}.

Suppose that a state trajectory reaches a target set $\mathcal{T}$ at time $t$, i.e., $\xi_{x}^{\mathbf{u,d}}(t)\in\mathcal{T}$, while being maintained within the constraint set $\mathcal{C}$ for all $\tau \in\{0,1,\dots ,t\}$ under an arbitrary sequence of disturbances $\mathbf{d}=\{d_t\}_{t=0}^\infty$. Let $\gamma\in[0,1)$ be the time-discount factor, which discounts the impact of future reward and constraints. We observe that such a sequence of control inputs $\mathbf{u} = \{u_t\}_{t=0}^\infty$ should satisfy
\begin{equation*}
\begin{aligned}
    \max_{u_0}\min_{d_0}&\max_{u_1}\min_{d_1}\cdots \max_{u_t}\min_{d_t} \min \big\{ \gamma^t r(\xi_{x}^{\mathbf{u,d}}(t)), \\ &\min_{\tau=0,\dots, t} \gamma^{\tau} c(\xi_{x}^{\mathbf{u,d}}(\tau)) \big\}>0.
\end{aligned}
\end{equation*}
Therefore, to verify the existence of a sequence of control inputs driving the state trajectory to the target set safely from an initial state $x_0$, it suffices to check the sign of the following term:
\begin{equation}\label{eq:inf_time_check_finite_time}
    \begin{aligned}
        \max_{u_0}\min_{d_0}\max_{u_1}&\min_{d_1}\dots \sup_{t=0,\dots} \min\big\{ \gamma^t r(\xi_{x}^{\mathbf{u,d}}(t)),\\& \min_{\tau = 0,\dots,t}\gamma^\tau c(\xi_{x}^{\mathbf{u,d}}(\tau)) \big\}.
    \end{aligned}
\end{equation}
By using the notion of the non-anticipative strategy $\phi$, we can simplify \eqref{eq:inf_time_check_finite_time} to the term
\begin{equation*}
\begin{aligned}
    \inf_{\phi}\max_{\mathbf{u}}&\sup_{t=0,\dots, } \min \big\{ \gamma^t r(\xi_{x}^{\mathbf{u},\phi(\mathbf{u})}(t)), \min_{\tau=0,\dots, t} \gamma^{\tau} c(\xi_{x}^{\mathbf{u},\phi(\mathbf{u})}(\tau)) \big\}.
\end{aligned}
\end{equation*}
This implies that a new value function for Problem~\ref{Reach_avoid_problem} can be defined~as
\begin{equation}\label{eq:inf_horizon_reach_avoid_problem}
\begin{aligned}
    V(x) := \inf_{\phi} &\max_{\mathbf{u}} \sup_{t=0,\dots} \min \big\{ \gamma^t r(\xi_{x}^{\mathbf{u},\phi(\mathbf{u})}(t)) ,\\& \min_{\tau=0,\dots,t} \gamma^{\tau} c(\xi_{x}^{\mathbf{u},\phi(\mathbf{u})}(\tau)) \big\},\ \forall x\in\mathbb{R}^n.
\end{aligned}
\end{equation}
The sign of the value function  encodes some crucial safety information about each state, that is, whether the state could be driven towards the target set while satisfying the given constraints. More precisely, $V(x)>0$ if for every disturbance strategy $\phi$, there exist $\mathbf{u}=\{u_t\}_{t=0}^\infty$ and $T<\infty$ such that $\xi_{x}^{\mathbf{u},\phi(\mathbf{u})}(T)\in\mathcal{T}$ and $\xi_{x}^{\mathbf{u},\phi(\mathbf{u})} (t)\in\mathcal{C}$ for all $ t\in[0,T]$. Based on this intuition, we characterize the relationship between the super-zero level set of the value function~\eqref{eq:inf_horizon_reach_avoid_problem} and the reach-avoid set next.

\begin{theorem}\label{thm:V>0}
Consider the value function $V(x)$ defined in \eqref{eq:inf_horizon_reach_avoid_problem}. For every $\gamma\in[0,1)$, it holds that
$\{x\in\mathbb{R}^n:V(x)>0\} = \mathcal{RA(T,C)}$.
\end{theorem}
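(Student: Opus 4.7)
The plan is to establish the set equality $\{x \in \mathbb{R}^{n} : V(x) > 0\} = \mathcal{RA(T, C)}$ by proving the two inclusions separately. The forward direction $\{V > 0\} \subseteq \mathcal{RA(T, C)}$ amounts to unpacking the suprema and decoding the positivity of the terms under the $\min$. The reverse direction $\mathcal{RA(T, C)} \subseteq \{V > 0\}$ is where the technical weight lies, since one must upgrade pointwise positivity of $W(x, \phi) := \max_{\mathbf{u}} \sup_{t} \min\{\gamma^{t} r(\xi_{x}^{\mathbf{u},\phi(\mathbf{u})}(t)), \min_{\tau \le t} \gamma^{\tau} c(\xi_{x}^{\mathbf{u},\phi(\mathbf{u})}(\tau))\}$ over $\phi \in \Phi$ to strict positivity of its infimum.

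For the forward inclusion I would fix $x$ with $V(x) > 0$ and an arbitrary non-anticipative strategy $\phi \in \Phi$. By definition of the infimum, $W(x, \phi) \ge V(x) > 0$. Choosing a (near-)maximizing control sequence $\mathbf{u}$ so that the inner supremum in $t$ is strictly positive, some finite $T \ge 0$ must satisfy $\min\{\gamma^{T} r(\xi_{x}^{\mathbf{u}, \phi(\mathbf{u})}(T)), \min_{\tau \le T} \gamma^{\tau} c(\xi_{x}^{\mathbf{u}, \phi(\mathbf{u})}(\tau))\} > 0$. Since $\gamma^{t} > 0$ for every $t$, this immediately decodes to $\xi_{x}^{\mathbf{u}, \phi(\mathbf{u})}(T) \in \mathcal{T}$ and $\xi_{x}^{\mathbf{u}, \phi(\mathbf{u})}(\tau) \in \mathcal{C}$ for every $\tau \in [0, T]$, and since $\phi$ was arbitrary this shows $x \in \mathcal{RA(T, C)}$.

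For the reverse inclusion, let $x \in \mathcal{RA(T, C)}$. For each $\phi$, the reach-avoid definition supplies a pair $(\mathbf{u}_{\phi}, T_{\phi})$ that immediately yields $W(x, \phi) \ge \gamma^{T_{\phi}} \min\{r(\xi_{x}^{\mathbf{u}_{\phi}, \phi(\mathbf{u}_{\phi})}(T_{\phi})), \min_{\tau \le T_{\phi}} c(\xi_{x}^{\mathbf{u}_{\phi}, \phi(\mathbf{u}_{\phi})}(\tau))\} > 0$, establishing pointwise positivity of $W(x, \cdot)$ over $\Phi$. The hard part will be showing that $V(x) = \inf_{\phi} W(x, \phi)$ is itself strictly positive, since a naive infimum of positive numbers can equal zero. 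My plan is to argue by contradiction: if $V(x) = 0$, extract a minimizing sequence $\phi_{n}$ with $W(x, \phi_{n}) \to 0$; equipping $\Phi$ with the product topology (under which $\mathcal{U}$ and $\mathcal{D}$ being compact makes $\Phi$ sequentially compact) and verifying lower semi-continuity of $W(x, \cdot)$, one passes to a limit strategy $\phi^{*}$ with $W(x, \phi^{*}) \le 0$. This forces $\min\{\gamma^{t} r(\xi), \min_{\tau} \gamma^{\tau} c(\xi)\} \le 0$ for every $\mathbf{u}$ and every $t$, so no pair $(\mathbf{u}, T)$ realizes reach-avoid against $\phi^{*}$, contradicting $x \in \mathcal{RA(T, C)}$. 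A more constructive alternative is to stratify $\mathcal{RA(T, C)} = \bigcup_{n} \mathcal{RA}_{n}$ by the required reach-avoid horizon $n$ and build, by backward induction on $n$, a feedback policy that uniformly defeats every $\phi$ in at most $n$ steps; combined with continuity of $r$ and $c$ on the bounded $n$-step forward reachable set, this produces a uniform bound $V(x) \ge \gamma^{n} m > 0$ for some positive constant $m$ depending only on $x$ and $n$.
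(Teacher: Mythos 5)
Your decomposition into two inclusions is sound, and your forward direction ($V(x)>0\Rightarrow x\in\mathcal{RA(T,C)}$) is in substance the same as the paper's necessity argument, which runs the identical logic in contrapositive form. Where you genuinely diverge is on the reverse inclusion. The paper's ``sufficiency'' step simply \emph{assumes} that membership in $\mathcal{RA(T,C)}$ comes with a single margin $\epsilon>0$ and a single horizon $T$ valid uniformly over all $\phi\in\Phi$, and then reads off $V(x)\ge\epsilon$; it never justifies the passage from the pointwise statement ``for every $\phi$ there exist $\mathbf{u},T$ with positive margin'' to this uniform one. You correctly isolate exactly that passage as the crux --- an infimum of positive quantities need not be positive --- and propose two ways to close it: a compactness/semicontinuity argument on $\Phi$, or a backward induction on the reach-avoid horizon. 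Either route, carried out in full, gives a more complete proof than the one printed in the paper. The inductive route is essentially what the Bellman fixed-point characterization (Theorem~\ref{thm:Bellman}) delivers, so it integrates naturally with the rest of the paper; the topological route is self-contained but heavier.

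Two caveats on your sketch of the topological route. First, $\Phi$ sits inside an uncountable product $\mathcal{D}^{\mathbb{Z}_+\times\,\mathcal{U}^{\mathbb{Z}_+}}$, which is compact by Tychonoff but not sequentially compact in general, so extracting a convergent subsequence from a minimizing sequence $\phi_n$ is not legitimate. Argue instead that $W(x,\cdot)$ is a supremum over $(\mathbf{u},t)$ of functions each depending on only finitely many coordinates of $\phi$ (continuously so, provided $f$ is also continuous in $d$, which the paper does not explicitly assume), hence lower semicontinuous on the compact set $\Phi$; it therefore attains its infimum at some $\phi^*$, and $V(x)=W(x,\phi^*)>0$ follows directly with no contradiction needed. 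Second, your step ``since $\gamma^t>0$ for every $t$'' fails at $\gamma=0$, where every term with $t\ge 1$ vanishes and $\{V>0\}$ collapses to $\mathcal{T}\cap\mathcal{C}$; the equality really requires $\gamma\in(0,1)$, an edge case the paper's statement overlooks as well.
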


We propose to compute the value function \eqref{eq:inf_horizon_reach_avoid_problem} using dynamic programming. To this end, a contractive Bellman backup will be derived below.
\label{sec:methods}

\begin{theorem}\label{thm:Bellman}
Let $\gamma\in[0,1)$ be the time-discount factor. Suppose $U:\mathbb{R}^n\to\mathbb{R}$ is a bounded function. Consider the Bellman backup $B[\cdot] $ defined as,
\begin{equation}\label{eq:bellman_backup}
\begin{aligned}
    B[U](x) \coloneqq &\min\big\{ c(x), \max\{ r(x), \gamma \max_u \min_{d} U(f(x,u,d)) \}\big\}.
\end{aligned}
\end{equation}
Then, \eqref{eq:inf_horizon_reach_avoid_problem} is the unique solution to the Bellman backup equation, i.e., $V=B[V]$, 
and $    \|B[V_1] - B[V_2]\|_\infty \le \gamma \| V_1-V_2 \|_\infty$, for all bounded functions $V_1$ and $V_2$.
\end{theorem}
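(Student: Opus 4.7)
The plan is to prove the three assertions in the following order: first the $\gamma$-contraction of $B$, then the fixed-point property $V = B[V]$, and finally uniqueness, which will be immediate from the contraction via Banach's fixed-point theorem on the Banach space of bounded functions equipped with $\|\cdot\|_\infty$.

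The contraction is elementary and pointwise. I would use the standard 1-Lipschitzness $|\max_u g_1(u) - \max_u g_2(u)| \le \sup_u |g_1(u) - g_2(u)|$ (and the analogous statement for $\min_d$) to conclude that $|\max_u \min_d V_1(f(x,u,d)) - \max_u \min_d V_2(f(x,u,d))| \le \|V_1 - V_2\|_\infty$ for every $x$ and all bounded $V_1, V_2$. Scaling by $\gamma$ and composing with the 1-Lipschitz maps $a \mapsto \max\{r(x), a\}$ and $a \mapsto \min\{c(x), a\}$ produces $|B[V_1](x) - B[V_2](x)| \le \gamma \|V_1 - V_2\|_\infty$; taking the sup over $x$ yields the desired contraction.

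To establish $V = B[V]$, my plan is first to derive a pointwise dynamic programming recursion for the pathwise payoff $W(x,\mathbf{u},\mathbf{d}) := \sup_{t \ge 0} \min\{\gamma^t r(\xi_x^{\mathbf{u},\mathbf{d}}(t)), \min_{\tau=0,\ldots,t}\gamma^\tau c(\xi_x^{\mathbf{u},\mathbf{d}}(\tau))\}$. Since $c(x) = \gamma^0 c(\xi_x^{\mathbf{u},\mathbf{d}}(0))$ enters the inner minimum for every $t \ge 0$, I would factor it out using the identity $\sup_t \min\{a, b_t\} = \min\{a, \sup_t b_t\}$, split the $t = 0$ contribution $r(x)$ from the $t \ge 1$ tail, and re-index the tail through the shifted trajectory starting at $x' = f(x, u_0, d_0)$. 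Pulling the common factor of $\gamma$ out of the tail then yields the recursion $W(x,\mathbf{u},\mathbf{d}) = \min\{c(x), \max\{r(x), \gamma W(x', \mathbf{u}', \mathbf{d}')\}\}$, with primes denoting the shifted tail sequences of controls and disturbances.

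The hard part will be propagating this pointwise identity through $\inf_\phi \max_\mathbf{u}$ to obtain the Bellman equation for $V$. Since $r(x)$ and $c(x)$ are constants with respect to the strategies, the outer $\min\{c(x),\cdot\}$ and $\max\{r(x),\cdot\}$ commute through $\inf_\phi$ and $\max_\mathbf{u}$ by monotonicity. The delicate step is decomposing non-anticipative play at the first time step: non-anticipativity applied with $T=0$ forces the zeroth entry of $\phi(\mathbf{u})$ to depend only on $u_0$, and I would prove that every $\phi \in \Phi$ factors as a first-move selection $u_0 \mapsto d_0$ together with a family of non-anticipative tail strategies parameterised by $(u_0, d_0)$, and that conversely any such family can be reassembled into an element of $\Phi$. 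Once this strategy-decomposition lemma (standard in the HJ reach-avoid literature) is in hand, interchanging $\inf_\phi \max_\mathbf{u}$ with the first-stage $\max_{u_0}\min_{d_0}$ followed by the tail $\inf_\phi \max_{\mathbf{u}'}$ yields $V(x) = \min\{c(x), \max\{r(x), \gamma \max_{u_0}\min_{d_0} V(f(x,u_0,d_0))\}\} = B[V](x)$. Uniqueness then follows since $V$ is bounded (as $r,c$ are) and the $\gamma$-contraction admits at most one fixed point in the space of bounded functions.
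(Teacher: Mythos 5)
Your proposal is correct and follows essentially the same route as the paper: the fixed-point identity is obtained by splitting off the $t=0$ term, pulling $c(x)$ and $r(x)$ out of the $\sup_t$ and the strategy optimization via the $\min$/$\max$ swap identities (the paper's Lemma~\ref{lem: max_min_swap}), and re-indexing the tail through $x'=f(x,u_0,d_0)$; the contraction follows from the $1$-Lipschitzness of $\min$, $\max$, $\sup_u$, and $\inf_d$, which is what the paper's choice of $u^*$ and $d^*$ implements. The only difference is one of explicitness: you correctly flag and sketch the non-anticipative strategy decomposition at the first time step, a step the paper's proof leaves implicit behind its appeal to Lemma~\ref{lem: max_min_swap}.
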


With the Bellman backup \eqref{eq:bellman_backup}, we define the \emph{value iteration} as the following recursion, starting from an arbitrary bounded function $V^{(0)}(\cdot):\mathbb{R}^n \to \mathbb{R}$, 
\begin{equation}\label{eq:value_iteration}
V^{(k+1)} := B[V^{(k)}],\  \forall k\in\mathbb{Z}_+.
\end{equation}
Since the fixed-point iteration based method using a contraction mapping $B$ will guarantee the convergence to a fixed point within that bounded function space, the Bellman backup \eqref{eq:bellman_backup} ensures the convergence of value iteration to \eqref{eq:inf_horizon_reach_avoid_problem}, i.e., $\lim_{k\to \infty}V^{(k)} = V$.

Once we obtain the value function \eqref{eq:inf_horizon_reach_avoid_problem}, we can calculate the following Q function:
\begin{equation}\label{eq:Q}
    Q(x,u,d) \coloneqq \min\big\{ c(x), \max \{r(x), \gamma V(f(x,u,d))\} \big\},
\end{equation}
where $V(x) = \max_u\min_d Q(x,u,d)$. Given a state $x\in\mathbb{R}^n$, we can find an optimal control input and disturbance by $u^* = \arg\max_u \min _d Q(x,u,d)$ and $d^*=\arg\min_d Q(x,u,d)$, respectively. This provides a way to extract the optimal control and worst-case disturbance for Problem \ref{Reach_avoid_problem}. Therefore, the proposed value function \eqref{eq:inf_horizon_reach_avoid_problem} constitutes a solution to Problem~\ref{Reach_avoid_problem} in the sense that one can find the reach-avoid set, the optimal sequence of control inputs, and the worst-case sequence of disturbances.

In what follows, we characterize the Lipschitz continuity of the proposed value function \eqref{eq:inf_horizon_reach_avoid_problem}. We show that the Lipschitz continuity of the value function \eqref{eq:inf_horizon_reach_avoid_problem} could be ensured under certain conditions on the dynamics and the time-discount factor.

\begin{theorem}[Lipschitz continuity]\color{black}
Suppose that the bounded functions $r(\cdot)$ and $c(\cdot)$ are $L_r$- and $L_c$-Lipschitz continuous, respectively. Assume also that the dynamics $f(x,u,d)$ is $L_f$-Lipschitz continuous in $x$, for all $u\in\mathcal{U}$ and $d\in\mathcal{D}$. Let $L:= \max(L_r,L_c)$. Then, $V(x)$ is $L$-Lipschitz~continuous~if~$L_f \gamma <1$.
\label{lemma:continuity}
\end{theorem}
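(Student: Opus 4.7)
The plan is to leverage the contraction property of the Bellman operator $B$ established in Theorem~\ref{thm:Bellman} and show that, under the hypothesis $\gamma L_f < 1$, $B$ preserves the Lipschitz constant $L := \max(L_r, L_c)$ along the value iteration \eqref{eq:value_iteration}. Uniform convergence of the iterates then transfers the Lipschitz bound to the fixed point $V$.

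The first step is a one-step lemma: if $W : \mathbb{R}^n \to \mathbb{R}$ is bounded and $L_W$-Lipschitz, then $B[W]$ is $\max(L,\, \gamma L_f L_W)$-Lipschitz. The ingredients are entirely pointwise. Because $f(\cdot, u, d)$ is $L_f$-Lipschitz uniformly in $(u,d)$, the composition $W \circ f(\cdot, u, d)$ is $L_W L_f$-Lipschitz for each fixed $(u,d)$; successively applying $\min_d$ and $\max_u$ preserves this bound since both operators are $1$-Lipschitz on a parametric family sharing a common Lipschitz constant. Hence $g(x) := \max_u \min_d W(f(x,u,d))$ is $L_W L_f$-Lipschitz, and $\gamma g$ is $\gamma L_W L_f$-Lipschitz. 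The elementary inequalities $|\max(a,a') - \max(b,b')| \le \max(|a-b|, |a'-b'|)$ and its $\min$-analogue then yield that $\max\{r(\cdot), \gamma g(\cdot)\}$ is $\max(L_r, \gamma L_W L_f)$-Lipschitz, and taking $\min$ with $c(\cdot)$ gives the claimed $\max(L_c, L_r, \gamma L_W L_f) = \max(L, \gamma L_W L_f)$-Lipschitz bound on $B[W]$.

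The second step is induction. Initialize value iteration at $V^{(0)} \equiv 0$, which is bounded and trivially $0$-Lipschitz. If $V^{(k)}$ is $L_k$-Lipschitz with $L_k \le L$, the one-step lemma gives Lipschitz constant $\max(L, \gamma L_f L_k)$ for $V^{(k+1)}$, and the hypothesis $\gamma L_f < 1$ together with $L_k \le L$ yields $\gamma L_f L_k \le L$, so $V^{(k+1)}$ is again $L$-Lipschitz. Hence every iterate is $L$-Lipschitz. The final step is to pass to the limit: Theorem~\ref{thm:Bellman} guarantees $V^{(k)} \to V$ in sup norm, so for any $x, y \in \mathbb{R}^n$ one has $|V(x) - V(y)| = \lim_{k\to\infty} |V^{(k)}(x) - V^{(k)}(y)| \le L \|x - y\|$, i.e.\ $V$ is $L$-Lipschitz continuous.

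I expect the main obstacle to lie in the one-step lemma, specifically in carefully justifying that $\max_u \min_d$ preserves Lipschitz constants when applied to a family of functions indexed by $(u,d)$ through $f$; this requires being explicit about taking suprema of differences versus differences of suprema. The role of the strict inequality $\gamma L_f < 1$ is precisely to close the induction at the finite constant $L$: if one instead tried a direct trajectory-based estimate via $\|\xi_x^{\mathbf{u},\mathbf{d}}(t) - \xi_y^{\mathbf{u},\mathbf{d}}(t)\| \le L_f^t\|x-y\|$ and $|\gamma^t r(\xi_x(t)) - \gamma^t r(\xi_y(t))| \le L_r (\gamma L_f)^t \|x-y\|$, the condition $\gamma L_f < 1$ is still exactly what keeps $(\gamma L_f)^t \le 1$ uniformly in $t$, so the two routes meet at the same hypothesis.
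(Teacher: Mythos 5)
Your proof is correct, but it follows a genuinely different route from the paper's. The paper works directly on the definition \eqref{eq:inf_horizon_reach_avoid_problem}: it fixes an $\epsilon$-optimal non-anticipative strategy $\bar{\phi}$ for $x_1$, an $\epsilon$-optimal pair $(\bar{\mathbf{u}},\bar{t})$ for $x_2$ against that same $\bar{\phi}$, and then bounds $V(x_1)-V(x_2)$ via the elementary inequality $\min_i a_i - \min_i b_i \ge \min_i(a_i-b_i)$ together with the trajectory estimate $\|\xi_{x_1}^{\mathbf{u},\mathbf{d}}(t)-\xi_{x_2}^{\mathbf{u},\mathbf{d}}(t)\|\le L_f^t\|x_1-x_2\|$, exactly the ``direct trajectory-based estimate'' you mention at the end; the condition $\gamma L_f<1$ enters to keep $(\gamma L_f)^t\le 1$ uniformly in $t$. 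Your argument instead proves a one-step Lipschitz-preservation lemma for the Bellman operator ($B[W]$ is $\max(L,\gamma L_f L_W)$-Lipschitz when $W$ is $L_W$-Lipschitz), closes an induction along value iteration starting from $V^{(0)}\equiv 0$, and passes the uniform bound to the limit using the sup-norm convergence from Theorem~\ref{thm:Bellman}. The trade-off: your route is more modular and avoids the somewhat delicate selection of $\epsilon$-optimal strategies and the asymmetric two-sided estimate, but it is not self-contained --- it leans on Theorem~\ref{thm:Bellman} both for $V=B[V]$ and for convergence of value iteration, whereas the paper's proof needs only the definition of $V$. Both arguments hinge on the same two facts ($\sup$/$\inf$ over a family of uniformly Lipschitz functions is Lipschitz with the same constant, and $\gamma L_f<1$ caps the accumulated constant at $L$), so they are genuinely two presentations of the same mechanism; your version would be slightly cleaner to present provided Theorem~\ref{thm:Bellman} is proved first, as it is in the paper.
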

\begin{remark}
It is known that the sample complexity of neural network approximation could be improved if the function to be approximated is continuous \cite{gouk2021regularisation}. Therefore, the Lipschitz continuity of the value funtion \eqref{eq:inf_horizon_reach_avoid_problem} is a beneficial property yielding reliable empirical performance when we deploy neural networks approximation to handle curse of dimensionality as highlighted in Section~\ref{sec:6d experiment}.
\end{remark}


\section{Computing the Viability Kernel and the Backward Reachable Set}\label{sec:pure_constraint_and_pure_target}
In this section, we show that the designed value function \eqref{eq:inf_horizon_reach_avoid_problem} can be adapted to compute the viability kernel \cite{fisac2019bridging} and the backward reachable set \cite{bansal2020deepreach}. We first present the formal definitions of these notions below.

We adopt the same definition of viability kernel as in \cite{fisac2019bridging}, where the \emph{viability kernel} is defined with respect to a closed set $\bar{\mathcal{C}}=\{x\in\mathbb{R}^n: c (x)\ge 0\}$, i.e., the closure of the constraint set $\mathcal{C}$, as
\begin{equation*}
    \Omega(\mathcal{\bar{C}}):=\{x\in\mathbb{R}^n:\forall \phi\in\Phi , \exists \mathbf{u} , \textrm{s.t., }\xi_{x}^{\mathbf{u},\phi (\mathbf{u})}(t)\in \mathcal{\bar{C}},\forall t\ge 0  \},
\end{equation*}
which contains all states $x$ from which a state trajectory $\xi_x^{\mathbf{u},\phi(\mathbf{u})}$ can be drawn such that $\xi_{x}^{\mathbf{u},\phi(\mathbf
{u})}(t)\in\bar{\mathcal{C}}$, $\forall t>0$. This set is of importance when the control task is to make the system satisfy some safety-critical constraints in real-world applications \cite{fisac2019bridging}.

The \emph{backward-reachable set} \cite{bansal2020deepreach} of a target set $\mathcal{T}=\{x\in\mathbb{R}^n: r(x)>0\}$ is defined as
\begin{equation*}
\begin{aligned}
    \mathcal{R(T)} := \{ x\in\mathbb{R}^n: &\forall \phi\in\Phi,\exists \mathbf{u}, \exists t \ge 0,\textrm{s.t., } \xi_{x}^{\mathbf{u},\phi(\mathbf{u})}(t)\in\mathcal{T}\},
\end{aligned}
\end{equation*}
which includes all states that can be driven towards the target set $\mathcal{T}$ in finite time. The backward reachable set is a useful concept in reachability analysis \cite{bansal2020deepreach} because it specifies the set of states that can be controlled to reach a given target set. For instance, if an airplane is in the backward reachable set of another one, then it indicates that the two planes could collide with each other if they are not properly controlled.

In what follows, we show how the value function \eqref{eq:inf_horizon_reach_avoid_problem} can be adapted to compute the viability kernel and backward-reachable set.

\begin{proposition}\label{theorem:pure_safe}
Assume that $r(x)=-1$ for all $ x\in\mathbb{R}^n$. It holds that $V(x)\le 0$ for all $ x\in\mathbb{R}^n$. In addition, $V(x)=0$ if and only if $x\in\Omega(\mathcal{\bar{C}})$.
\end{proposition}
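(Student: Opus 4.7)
My plan is to prove the two claims in the proposition separately, using the explicit form of the value function for the first two parts and the Bellman characterization from Theorem~\ref{thm:Bellman} for the reverse implication.

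First, I would establish the upper bound $V(x)\le 0$. With $r\equiv -1$, the inner expression in the definition of $V$,
\begin{equation*}
    \min\bigl\{-\gamma^t,\ \min_{\tau\le t}\gamma^\tau c(\xi_x^{\mathbf{u},\phi(\mathbf{u})}(\tau))\bigr\},
\end{equation*}
is bounded above by $-\gamma^t$. Taking the supremum over $t\ge 0$ yields at most $\sup_{t\ge 0}(-\gamma^t)=0$ since $\gamma\in[0,1)$. As this bound is uniform in $\phi$ and $\mathbf{u}$, it immediately gives $V(x)\le 0$.

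Next, I would prove the easy direction $x\in\Omega(\bar{\mathcal{C}})\Rightarrow V(x)=0$. Fix any $\phi\in\Phi$. By the definition of the viability kernel, there exists $\mathbf{u}$ with $c(\xi_x^{\mathbf{u},\phi(\mathbf{u})}(\tau))\ge 0$ for every $\tau\ge 0$, hence $\min_{\tau\le t}\gamma^\tau c(\xi(\tau))\ge 0$ and the inner min collapses to $-\gamma^t$. Letting $t\to\infty$ gives $\sup_t(-\gamma^t)=0$, so $\max_{\mathbf{u}}\sup_t\min\{\ldots\}\ge 0$ for this $\phi$. Combined with the upper bound, every $\phi$ produces value $0$, so $V(x)=\inf_\phi[\cdot]=0$.

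For the hard direction $V(x)=0\Rightarrow x\in\Omega(\bar{\mathcal{C}})$, I would invoke the Bellman equation $V=B[V]$. Since $V\le 0$ and $\gamma\ge 0$, one has $\gamma\max_u\min_d V(f(x,u,d))\le 0$, so the middle $\max\{-1,\cdot\}$ lies in $[-1,0]$. For the outer $\min\{c(x),\cdot\}$ to attain $0$, both arguments must equal $0$. This forces $c(x)\ge 0$ (so $x\in\bar{\mathcal{C}}$) and $\max_u\min_d V(f(x,u,d))=0$. Picking a maximizer $u^\star(x)$ (see the obstacle below) yields $V(f(x,u^\star(x),d))=0$ for every $d\in\mathcal{D}$, so the zero level set $Z:=\{V=0\}$ is forward-invariant against the worst-case disturbance and satisfies $Z\subseteq\bar{\mathcal{C}}$. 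Given any non-anticipative $\phi$, I would construct $\mathbf{u}$ inductively: at step $t$, set $u_t=u^\star(\xi(t))$; the non-anticipative property makes $d_t$ depend only on $u_0,\dots,u_t$, so $\xi(t+1)\in Z$, and induction produces a trajectory staying in $Z\subseteq\bar{\mathcal{C}}$ for all time, giving $x\in\Omega(\bar{\mathcal{C}})$.

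The main obstacle is the attainment of $\arg\max_u\min_d V(f(x,u,d))$, which is needed to define a well-posed feedback policy $u^\star$. I would handle it by invoking the compactness of $\mathcal{U}$ and $\mathcal{D}$ together with the continuity of $f$ and of $V$; under the hypothesis $L_f\gamma<1$ of Theorem~\ref{lemma:continuity} the attainment is immediate. In the general bounded case, I would fall back on an approximation argument, taking an $\varepsilon$-maximizer $u_\varepsilon$ with $\min_d V(f(x,u_\varepsilon,d))\ge -\varepsilon$, extracting a convergent subsequence from $\mathcal{U}$, and passing to the limit using upper semi-continuity of $u\mapsto \min_d V(f(x,u,d))$ inherited from the value iteration $V^{(k)}\to V$.
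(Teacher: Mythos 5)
Your bound $V(x)\le 0$ and your argument that $x\in\Omega(\bar{\mathcal{C}})$ implies $V(x)=0$ are correct and essentially the same as the paper's. The divergence is in the converse implication $V(x)=0\Rightarrow x\in\Omega(\bar{\mathcal{C}})$: you run a dynamic-programming verification argument (show $Z=\{V=0\}\subseteq\bar{\mathcal{C}}$ is robustly forward-invariant under a maximizing feedback and synthesize a control against each non-anticipative $\phi$), whereas the paper argues the contrapositive directly from the definition \eqref{eq:inf_horizon_reach_avoid_problem}: if $x\notin\Omega(\bar{\mathcal{C}})$, there is a strategy $\phi$ under which every $\mathbf{u}$ eventually yields $c(\xi_x^{\mathbf{u},\phi(\mathbf{u})}(t_{\phi,\mathbf{u}}))<0$, and plugging this $\phi$ into the infimum bounds $V(x)$ above by $\max_{\mathbf{u}}\gamma^{t_{\phi,\mathbf{u}}}c(\xi_x^{\mathbf{u},\phi(\mathbf{u})}(t_{\phi,\mathbf{u}}))<0$. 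The paper's route requires no regularity of $V$ and no attainment of any extremum.

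The gap in your route is precisely the obstacle you flag, and neither of your proposed repairs closes it under the hypotheses of the proposition. The clean version needs an exact maximizer $u^\star(x)$ with $\min_d V(f(x,u^\star(x),d))=0$, but continuity of $V$ is only established under the extra condition $L_f\gamma<1$ (Theorem~\ref{lemma:continuity}), and continuity of $f$ in $(u,d)$ is never assumed, so attainment is not available in general; for the same reason the claimed upper semicontinuity of $u\mapsto\min_d V(f(x,u,d))$ is not automatically ``inherited from the value iteration.'' More seriously, the $\varepsilon$-maximizer fallback fails quantitatively: from $V(\xi(t))\ge-\varepsilon$ the fixed-point equation only yields $c(\xi(t))\ge-\varepsilon$ and $\max_u\min_d V(f(\xi(t),u,d))\ge-\varepsilon/\gamma$, so the defect is multiplied by $1/\gamma>1$ at every step, the guarantee degrades to $c(\xi(t))\ge-\varepsilon/\gamma^{t}$, and no conclusion about remaining in $\bar{\mathcal{C}}$ for all time survives the limit $t\to\infty$. (A minor additional point: inferring $\max_u\min_d V(f(x,u,d))=0$ from $\gamma\max_u\min_d V(f(x,u,d))=0$ silently assumes $\gamma>0$, while the paper allows $\gamma\in[0,1)$.) If you want to keep your overall structure, the robust fix is to abandon the fixed-point equation for this direction and argue from the definition of $V$ as the paper does.
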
%

\begin{remark}
The value function in Proposition \ref{theorem:pure_safe} could serve as a control barrier function \cite{ames2016control}, from which we can derive a policy keeping the system states within a given set of the state space.
\end{remark}


Similarly, we can compute the backward-reachable set by substituting particular constraint functions into \eqref{eq:inf_horizon_reach_avoid_problem}.

\begin{proposition}\label{theorem:pure_reach}
Assume that $c(x)=1$ for all $x\in\mathbb{R}^n$. It holds that $V(x)\ge 0$ for all $x\in\mathbb{R}^n$. In addition, $V(x) >  0$ if and only if $x\in \mathcal{R(T)}$.
\end{proposition}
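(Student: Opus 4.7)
The plan is to derive Proposition~\ref{theorem:pure_reach} as a specialization of Theorem~\ref{thm:V>0} together with an elementary non-negativity argument. Substituting $c(x)=1$ into \eqref{eq:inf_horizon_reach_avoid_problem}, the running term $\min_{\tau=0,\dots,t}\gamma^\tau c(\xi^{\mathbf{u},\phi(\mathbf{u})}_x(\tau))$ reduces to $\min_{\tau=0,\dots,t}\gamma^\tau = \gamma^t$ for $\gamma\in(0,1)$ (and, for $\gamma=0$, equals $1$ at $t=0$ and $0$ for $t\ge 1$). In the generic case this collapses the inner minimum to
\begin{equation*}
\min\bigl\{\gamma^t r(\xi^{\mathbf{u},\phi(\mathbf{u})}_x(t)),\gamma^t\bigr\} = \gamma^t\,\min\bigl\{r(\xi^{\mathbf{u},\phi(\mathbf{u})}_x(t)),1\bigr\},
\end{equation*}
so $V$ simplifies to $V(x)=\inf_\phi\max_{\mathbf{u}}\sup_{t\ge 0}\gamma^t\min\{r(\xi^{\mathbf{u},\phi(\mathbf{u})}_x(t)),1\}$.

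Next I would establish $V(x)\ge 0$ directly from this form. Since $r$ is bounded and $\gamma\in[0,1)$, the sequence $\gamma^t\min\{r(\xi^{\mathbf{u},\phi(\mathbf{u})}_x(t)),1\}$ converges to $0$ as $t\to\infty$ for every fixed $\mathbf{u}$ and $\phi$. Hence $\sup_{t\ge 0}\gamma^t\min\{r(\xi^{\mathbf{u},\phi(\mathbf{u})}_x(t)),1\}\ge 0$ (the supremum of any real sequence converging to $0$ is at least $0$). Because this lower bound holds for every $\mathbf{u}$ and every $\phi$, taking $\max_{\mathbf{u}}$ and $\inf_\phi$ preserves the inequality, giving $V(x)\ge 0$. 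The $\gamma=0$ case can be handled separately by direct inspection, noting that the $t\ge 1$ terms already contribute $0$ to the supremum.

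For the equivalence $V(x)>0\Leftrightarrow x\in\mathcal{R}(\mathcal{T})$, I would appeal to Theorem~\ref{thm:V>0}. With $c\equiv 1$, the constraint set $\mathcal{C}=\{x:c(x)>0\}=\mathbb{R}^n$, so the condition $\xi^{\mathbf{u},\phi(\mathbf{u})}_x(t)\in\mathcal{C}$ in the definition of $\mathcal{RA(T,C)}$ is automatic, and the set reduces precisely to
\begin{equation*}
\{x\in\mathbb{R}^n:\forall\phi\in\Phi,\exists\mathbf{u},\exists T\ge 0,\ \xi^{\mathbf{u},\phi(\mathbf{u})}_x(T)\in\mathcal{T}\} = \mathcal{R(T)}.
\end{equation*}
Theorem~\ref{thm:V>0} then yields $\{x:V(x)>0\}=\mathcal{RA(T,\mathbb{R}^n)}=\mathcal{R(T)}$, which is exactly the claim.

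The main obstacle is merely being careful with the supremum: the value $0$ is generically a limit point rather than an attained value, so one must not conflate $\sup\ge 0$ with the existence of a $t$ achieving a non-negative inner term. Once that is handled cleanly, the rest of the proposition is essentially a bookkeeping exercise relying on Theorem~\ref{thm:V>0}.
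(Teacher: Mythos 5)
Your proposal is correct, and the two halves relate to the paper's proof differently. For non-negativity you make explicit the argument the paper leaves implicit: with $c\equiv 1$ the inner term becomes $\gamma^t\min\{r(\xi_x^{\mathbf{u},\phi(\mathbf{u})}(t)),1\}$, which tends to $0$ because $r$ is bounded and $\gamma<1$, so the supremum over $t$ is at least $0$ for every $\mathbf{u}$ and $\phi$; your warning that $0$ is generally only a limit point of the sequence rather than an attained value is exactly the right point of care. For the equivalence $V(x)>0\Leftrightarrow x\in\mathcal{R(T)}$ you take a genuinely different route: you observe that $c\equiv 1$ gives $\mathcal{C}=\mathbb{R}^n$, so $\mathcal{RA(T,C)}$ collapses to $\mathcal{R(T)}$, and then invoke Theorem~\ref{thm:V>0}. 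The paper instead re-runs a self-contained $\epsilon$-quantified sufficiency/necessity argument (mirroring the proof of Theorem~\ref{thm:V>0}): existence of a uniform $\epsilon>0$ with $\gamma^{T_{\phi,\mathbf{u}}}r(\xi_x^{\mathbf{u},\phi(\mathbf{u})}(T_{\phi,\mathbf{u}}))>\epsilon$ gives $V(x)\ge\epsilon$, and the contrapositive gives $V(x)\le\epsilon$ for all $\epsilon>0$. Your reduction buys brevity and avoids duplicating work, at the cost of inheriting whatever delicacies reside in Theorem~\ref{thm:V>0} (e.g.\ both arguments are really only meaningful for $\gamma>0$, since at $\gamma=0$ only the $t=0$ term survives); the paper's version is self-contained. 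Either way the proposition follows.
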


Given the above results, our method provides a new theoretical angle to computing the reach-avoid set, viability kernel, and backward reachable set for infinite-horizon games, where a common value function \eqref{eq:inf_horizon_reach_avoid_problem} can be adaptively used to compute multiple important sets for safety-critical analysis.

\section{A Deep Reinforcement Learning Algorithm}
In this section, we develop a deep RL algorithm, which alleviates the curse of dimensionality issue for high dimensional problems. In particular, since the satisfaction of given safety constraints is vital for safety-critical systems, we propose to extend conservative Q-learning (CQL) \cite{kumar2020conservative} to Problem \ref{Reach_avoid_problem} and develop a deep RL algorithm for solving it, where the conservatism is favorable due to the neural network approximation error. More precisely, we learn a value function parameterized by a neural network which in theory is a lower bound of the value function \eqref{eq:inf_horizon_reach_avoid_problem}. The super-zero level set of the converged value function in CQL is a subset of $\mathcal{RA(T,C)}$, i.e., a conservative approximation of $\mathcal{RA(T,C)}$.

The reason why CQL manages to learn a value function lower bound \eqref{eq:inf_horizon_reach_avoid_problem} is that CQL minimizes not only the Bellman backup error but also the value function itself at each iteration \cite{kumar2020conservative}. 
Since the super-zero level set of the value function \eqref{eq:inf_horizon_reach_avoid_problem} is the reach-avoid set $\mathcal{RA(T,C)}$, the super-zero level set of a function that is a lower bound of \eqref{eq:inf_horizon_reach_avoid_problem} is a subset of $\mathcal{RA(T,C)}$.

Similar to prior works on Deep RL \cite{mnih2013playing,silver2014deterministic,kumar2020conservative}, we approximate the Q function \eqref{eq:Q} by a neural network $Q_\theta(x,u,d):\mathbb{R}^n\times \mathcal{U}\times \mathcal{D}\to \mathbb{R} $, where $\theta\in\mathbb{R}^M$ is the vector of parameters of the neural network and $M$ is the total number of parameters. {We define the neural network value function as $V_\theta(x) := \max_u\min_d Q_\theta(x,u,d)$, $\forall x\in\mathbb{R}^n$.} We adopt the CQL framework \cite{kumar2020conservative} to learn the value function \eqref{eq:inf_horizon_reach_avoid_problem} by replacing the Bellman backup therein with \eqref{eq:bellman_backup}. With $\lambda\ge 0$, we extend the loss function in \cite{kumar2020conservative} to the reach-avoid zero-sum game setting and propose the following loss function $L(\theta):\theta\in\mathbb{R}^M \to\mathbb{R} $ for the neural network parameters:
\begin{equation}\label{eq:cql_loss}
	\begin{aligned}
		L(\theta) :=&\mathbb{E}_{x\sim \mu}\big[ \|  V_\theta(x)  - B[ V_\theta](x) \|_2^2 + \lambda V_\theta(x) \big],
	\end{aligned}
\end{equation}
where $\mu$ is the uniform distribution over a compact set $\mathcal{X}\subseteq \mathbb{R}^n$ and the set $\mathcal{X}$ is subject to user's choice. 

As suggested in \cite{kumar2020conservative}, given a Bellman backup \eqref{eq:bellman_backup}, its corresponding CQL Bellman backup $B_{CQL}[\cdot]$ can be derived as
\begin{equation}\label{eq:CQL_Bellman_backup}
	B_{CQL}[U](x) := B[U](x) - \lambda,
\end{equation}
where $U:\mathbb{R}^n\to \mathbb{R}$ is a bounded function and $B[\cdot]$ is the Bellman backup \eqref{eq:bellman_backup}. One can show that this Bellman backup is a contraction mapping by a similar reasoning as in the proof of Theorem~\ref{thm:Bellman}.

Given a bounded functional space, the fixed-point theorem \cite{agarwal2001fixed} ensures that the value iteration based on \eqref{eq:CQL_Bellman_backup} will converge to a unique value function, which we refer to as $V_{CQL}(x)$. We characterize the relationship between the nominal value function $V(x)$ in \eqref{eq:inf_horizon_reach_avoid_problem} and its conservative counterpart $V_{CQL}(x)$ below.
\begin{theorem}\label{thm:sandwidch}
	Let $V(x)$ and $V_{CQL}(x)$ be the nominal value function defined in \eqref{eq:inf_horizon_reach_avoid_problem} and CQL value function, respectively. We have
	\begin{equation}
		V(x) - \frac{\lambda}{1-\gamma}\le V_{CQL}(x) \le V(x)-\lambda,\ \forall x \in \mathbb{R}^n.
	\end{equation}
\end{theorem}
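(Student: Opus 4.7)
The plan is to prove the two inequalities with different techniques: a monotone induction along the CQL value iteration for the upper bound, and a sup-norm contraction argument for the lower bound. As a preliminary, I would note that the Bellman backup $B$ in \eqref{eq:bellman_backup} is monotone in the pointwise order, i.e., $U_1 \le U_2 \implies B[U_1] \le B[U_2]$, since each layer of its definition (the inner $\min_d$, the outer $\max_u$, scaling by $\gamma \in [0,1)$, and the clippings $\max\{r(x),\cdot\}$ and $\min\{c(x),\cdot\}$) is order-preserving.

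For the upper bound $V_{CQL}(x) \le V(x) - \lambda$, I would initialize the iteration $V^{(k+1)}_{CQL} = B_{CQL}[V^{(k)}_{CQL}]$ at $V^{(0)}_{CQL} = V$ and show by induction on $k \ge 1$ that $V^{(k)}_{CQL} \le V - \lambda$. The base case $V^{(1)}_{CQL} = B[V] - \lambda = V - \lambda$ uses the fixed-point property $V = B[V]$ from Theorem~\ref{thm:Bellman}. For the inductive step, chaining $V^{(k)}_{CQL} \le V - \lambda \le V$ with monotonicity of $B$ yields $B[V^{(k)}_{CQL}] \le B[V] = V$, so $V^{(k+1)}_{CQL} = B[V^{(k)}_{CQL}] - \lambda \le V - \lambda$. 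Since $B_{CQL}$ inherits the $\gamma$-contraction of $B$ on the space of bounded functions, $V^{(k)}_{CQL} \to V_{CQL}$ uniformly, and the bound passes to the limit.

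For the lower bound $V(x) - \frac{\lambda}{1-\gamma} \le V_{CQL}(x)$, I would subtract the fixed-point equations $V = B[V]$ and $V_{CQL} = B[V_{CQL}] - \lambda$ to get $V - V_{CQL} = B[V] - B[V_{CQL}] + \lambda$. Taking sup norms, applying the triangle inequality, and invoking the $\gamma$-contraction of $B$ from Theorem~\ref{thm:Bellman} yields $\|V - V_{CQL}\|_\infty \le \gamma \|V - V_{CQL}\|_\infty + \lambda$, which rearranges to $\|V - V_{CQL}\|_\infty \le \frac{\lambda}{1-\gamma}$. In particular, $V_{CQL} \ge V - \frac{\lambda}{1-\gamma}$ pointwise.

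The main obstacle is that $B$ is not strictly translation-invariant: because of the outer $\min\{c(x),\cdot\}$ and inner $\max\{r(x),\cdot\}$, the identity $B[U - \alpha] = B[U] - \alpha$ fails for generic $\alpha > 0$, so additive constants cannot be freely pushed in and out of $B$. This is exactly why the upper bound proof uses only monotonicity (requiring inequalities rather than equalities) and the lower bound proof uses sup-norm contraction (bypassing any pointwise translation identity). A small housekeeping item is to verify that $V$ and $V_{CQL}$ are bounded, which follows from the boundedness of $r$ and $c$ together with the fact that $B$ and $B_{CQL}$ preserve boundedness.
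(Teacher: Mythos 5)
Your proof is correct, and the lower-bound half takes a genuinely different route from the paper. For the upper bound you do essentially what the paper does: a monotone induction along the CQL value iteration using order preservation of $B$ (the paper's Lemma~\ref{lem:inequality_minimax}), except that you initialize at the fixed point $V$ itself so that the base case collapses to an equality $V^{(1)}_{CQL}=B[V]-\lambda=V-\lambda$; the paper instead starts both iterations from a common arbitrary bounded $V^{(0)}$. For the lower bound the paper runs a second induction that pushes the constant $-\lambda$ through the clipped backup via the inequality $B[U-\lambda]\ge B[U]-\gamma\lambda$, accumulating $V^{(k)}_{CQL}\ge V^{(k)}-\sum_{i=0}^{k-1}\gamma^{i}\lambda$ and passing to the limit. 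You instead subtract the two fixed-point equations $V=B[V]$ and $V_{CQL}=B[V_{CQL}]-\lambda$ and invoke the $\gamma$-contraction from Theorem~\ref{thm:Bellman} to get $\|V-V_{CQL}\|_\infty\le\gamma\|V-V_{CQL}\|_\infty+\lambda$, hence $\|V-V_{CQL}\|_\infty\le\lambda/(1-\gamma)$. This is shorter, avoids the translation bookkeeping entirely, and yields the same constant; its only cost is that it produces a two-sided bound of which only one side is useful (the other side, $V_{CQL}\le V+\lambda/(1-\gamma)$, is weaker than the claimed $V_{CQL}\le V-\lambda$), which is exactly why you correctly keep the separate monotone argument for the upper bound. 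Your diagnosis of the failure of translation invariance of $B$, and your remark that boundedness of $V$ and $V_{CQL}$ is needed for the sup-norm step, are both on point.
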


Theorem~\ref{thm:sandwidch} implies that the value function learned by CQL would be a lower bound of the true value function \eqref{eq:inf_horizon_reach_avoid_problem}. However, the learned value function from CQL would still be bounded, and it is not too pessimistic, i.e., no state will have a value going to negative infinity. We remark here that the bounds in Theorem~\ref{thm:sandwidch} are specialized for the reach-avoid zero-sum game.

Building on the above results, we propose Algorithm \ref{alg:DQN}, where we substitute the Bellman backup in the algorithm of CQL \cite{kumar2020conservative} with \eqref{eq:bellman_backup}. As in DQN \cite{mnih2013playing}, we adopt \emph{experience replay} in Algorithm~\ref{alg:DQN}, where we store the state transition $\psi_t=(\xi_x^{\mathbf{u},\mathbf{d}}(t),u_t,d_t,\xi_x^{\mathbf{u},\mathbf{d}}(t+1))$ at each time step in a replay memory data-set $\Psi=\{\psi_0,\psi_1,\dots\}$. Subsequently, we apply Q-learning updates to random samples from $\Psi$ in step \ref{alg:update_gradient} of Algorithm \ref{alg:DQN}. It is shown that applying Q-learning to random samples drawn from experience replay could break down the correlation between consecutive transitions and recall rare transitions, which leads to a better convergence performance \cite{mnih2013playing}. 

\begin{algorithm}[t]
	\SetAlgoLined
	\caption{Conservative Reach-Avoid Deep Q-learning}\label{alg:DQN}
	Initialize the action-value function $Q_{\theta_0}(x,u,d)$ with random weights $\theta_0$. Select a stepsize $\alpha>0$, the total number of iterations $K$, the batch size $J$, and the rollout horizon $T$\;
	\For{k = 0,\dots , K}{
		Uniformly sample $x\in\mathcal{X}$\;
		Compute $\mathbf{u}$ and $\mathbf{d}$ such that $u_t \gets \arg\max_u \min_d Q_{\theta_k}(\xi_x^{\mathbf{u,d}}(t),u,d)$ and $d_t\gets \arg \min_d Q_{\theta_k}(\xi_x^{\mathbf{u,d}}(t),u_t,d) $\;
		\For{t= 0,\dots, T}{
			Store $\psi_t = (\xi_x^{\mathbf{u,d}}(t),u_t,d_t,\xi_x^{\mathbf{u,d}}(t+1))$ in $\Psi$\;
		}		
		\For{j=1,\dots,J}{
			Sample a transition $\psi_j$ in $\Psi$\;
			Set $y_j \gets \min \{c(\xi_x^{\mathbf{u,d}}(j)), \max\{ r(\xi_x^{\mathbf{u,d}}(j)),$ $\gamma \max_u \min_d Q_{\theta_k}(\xi_x^{\mathbf{u,d}}(j),u,d) \} \}$\;\label{alg:bellman}
		}
		$\theta_{k+1} \gets\theta_{k} - \alpha \nabla_\theta \Big( \sum_{j=1}^J \big((y_j - Q_{\theta} (\xi_x^{\mathbf{u,d}}(j),u_j,d_j))^2+\lambda Q_{\theta}(\xi_x^{\mathbf{u,d}}(j),u_j,d_j)\big)\Big)$\;\label{alg:update_gradient}
	}
\end{algorithm}

\section{Experiments}\label{sec:experiments}
In this section, we present experiments on Algorithm~\ref{alg:DQN} and show that our method could learn a (conservative) approximation to the reach-avoid set reliably even with neural network approximation. 

\subsection{2D Experiment for Reach-Avoid Game}\label{sec:2d_experiment}
In this experiment, we compare the reach-avoid set learned by Algorithm~\ref{alg:DQN} with the one learned by tabular Q-learning, where we first grid the continuous state space and then run value iteration \eqref{eq:bellman_backup} over the grid. We treat the reach-avoid set learned by tabular Q-learning as the ground truth solution. We apply Algorithm~\ref{alg:DQN} to learn neural network Q functions with $4$ hidden layers, where each hidden layer has 128 neurons with ReLu activation functions. This neural network architecture is chosen because empirically it provides sufficient model capacity for approximating the true value function.

Let the target set and constraint set be $\mathcal{T}=\{(x,y)\in\mathbb{R}^2: 1-(x^2+y^2)>0\}$ and $\mathcal{C} = \{(x,y)\in\mathbb{R}^2: 1-\frac{(x-2)^2}{1.5^2} - y^2>0 \}$, respectively. We visualize the target set $\mathcal{T}$ and the constraint set $\mathcal{C}$ in Figure \ref{fig:2d_target_constraint}. Consider the following discrete-time double integrator dynamics, with the time constant $\Delta t = 0.02s$:
\begin{figure}[t]
    \centering
    \includegraphics[clip, trim = 5cm 10.5cm 5cm 10.5cm, width =0.3\textwidth]{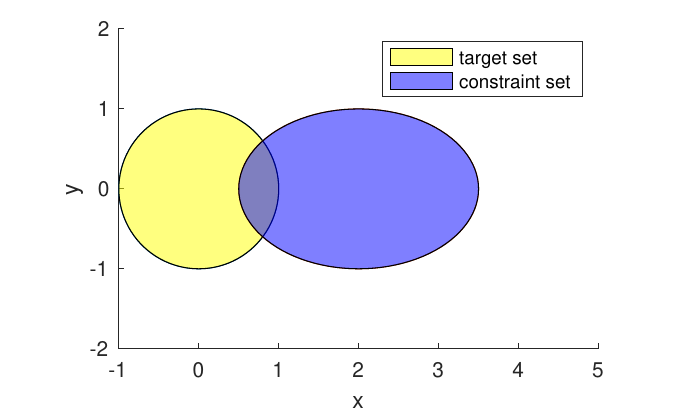}
    \caption{The target set and the constraint set in the 2D experiment}
    \label{fig:2d_target_constraint}
\end{figure}

\begin{figure}[t]
\centering
\includegraphics[clip,trim = 0cm 9.8cm 0cm 9.8cm,width=0.5\textwidth]{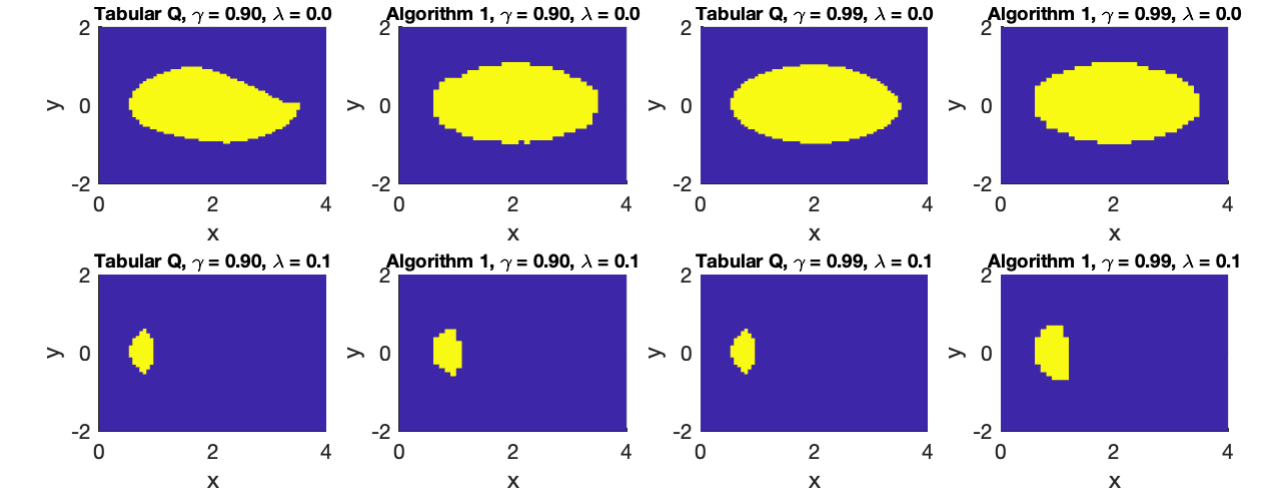}
\caption{Comparison of the 2D reach-avoid set learned by tabular Q-learning and Algorithm~\ref{alg:DQN}. The yellow area corresponds to the reach-avoid set.}
\label{fig:DI}
\end{figure}

\begin{figure}[t]
    \centering
    \includegraphics[clip, trim = 3cm 7.7cm 2.6cm 7.7cm, width =0.35\textwidth]{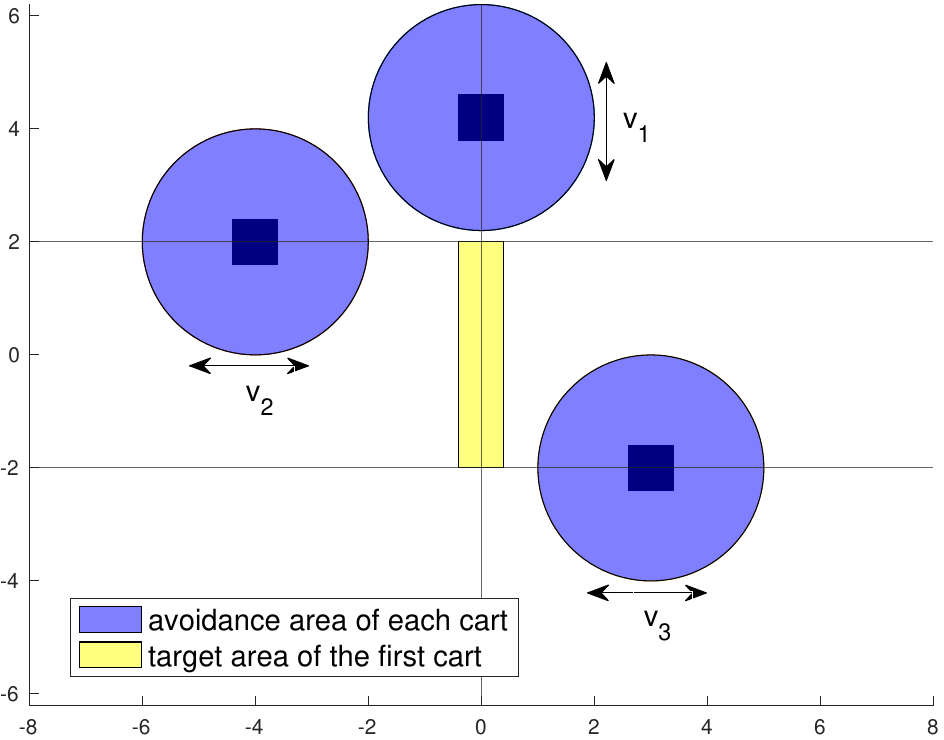}
    \caption{The avoidance area and the target area in the three-cart reach-avoid zero-sum game experiment}
    \label{fig:6d_target_constraint}
\end{figure}

\begin{figure}[h!]
    \centering
    \includegraphics[clip,trim = 1cm 9.2cm 1cm 9cm,width=0.5\textwidth]{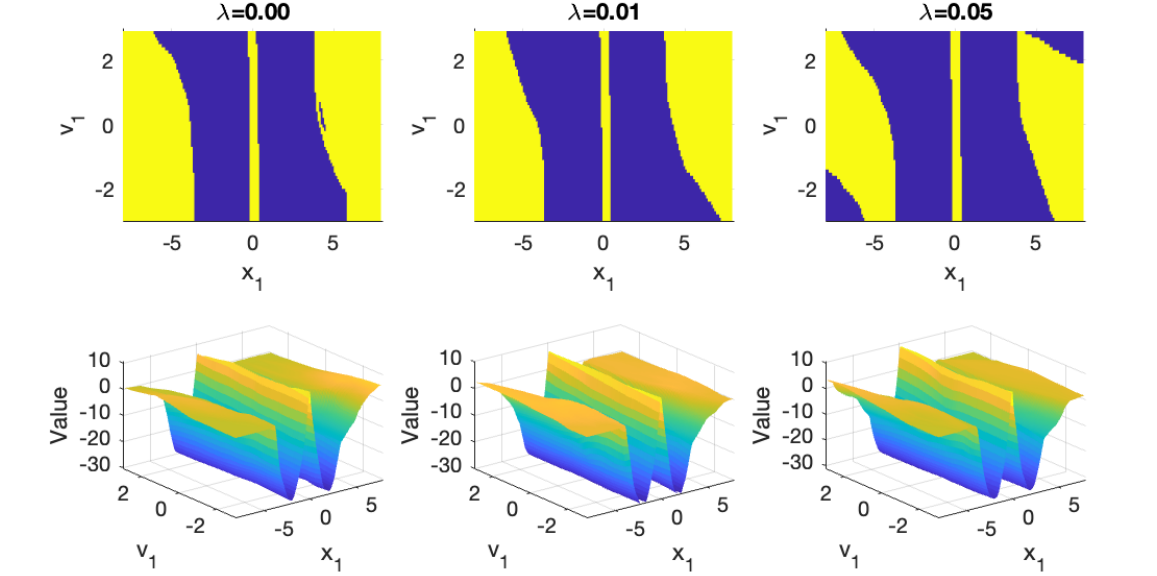}
    \caption{Visualization of reach-avoid sets under different CQL penalty weights $\lambda$, with $[x_2,v_2,x_3,v_3]=[-1,1,1, -1]$. In the first row, the yellow area represents the reach-avoid set. In the second row, we plot the value of each point in the corresponding plots in the first row.}
    \label{fig:DI_3}
\end{figure}

\begin{figure}[h!]
    \centering
    \includegraphics[clip, trim = 1cm 11.4cm 1cm 11.5cm, width = 0.5\textwidth]{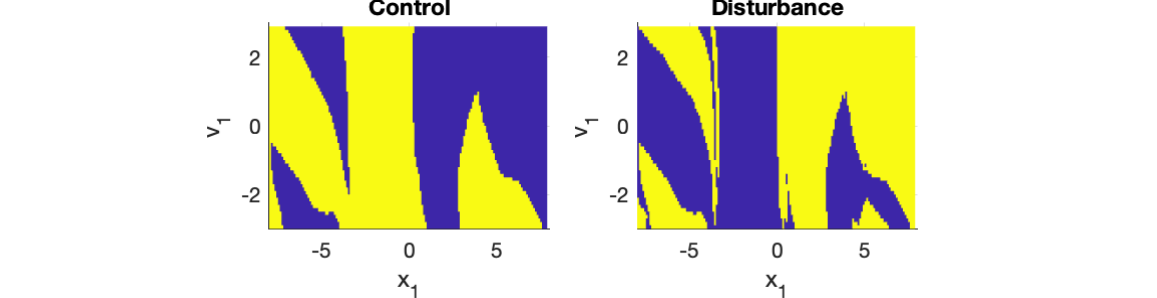}
    \caption{The control and disturbance policies extracted from the neural network value function learned by Algorithm \ref{alg:DQN} with $\lambda = 0.0$. The other four states are $[x_2,v_2,x_3,v_3] = [-1,1,1,-1]$. The yellow and blue areas in the left plot correspond to the control inputs with the values 1 and -1, respectively. The yellow and blue ares in the right plot correspond to the disturbances with the values $0.5$ and $-0.5$, respectively.}
    \label{fig:policy}
\end{figure}

\begin{figure}[t]
    \centering
    \includegraphics[clip, trim = 0cm 11.6cm 0cm 11.6cm, width = 0.5\textwidth]{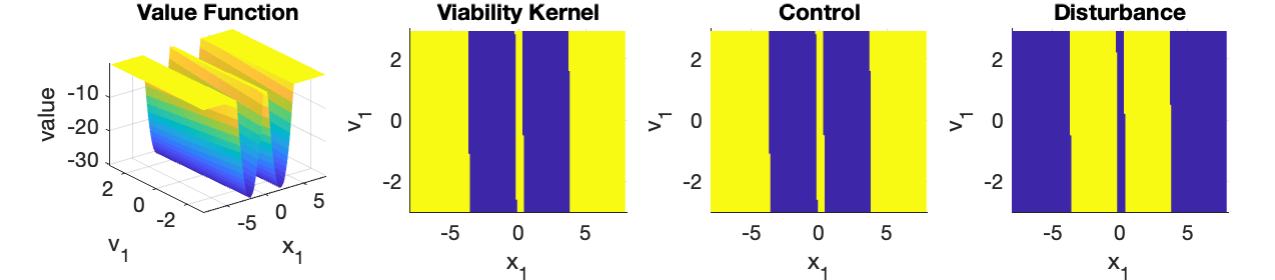}
    \caption{Visualization of the learned viability kernel, with $[x_2,v_2,x_3,v_3]=[1,1,-1,1]$. In the second subplot, the yellow area is the viability kernel. In the third subplot, the yellow and blue areas correspond to the control inputs $1$ and $-1$, respectively. In the fourth subplot, the yellow and blue areas correspond to the disturbances $0.5$ and $-0.5$, respectively.}
    \label{fig:VK}
\end{figure}
\begin{figure}[h!]
    \centering
    \includegraphics[clip, trim = 0cm 11.6cm 0cm 11.6cm, width = 0.5\textwidth]{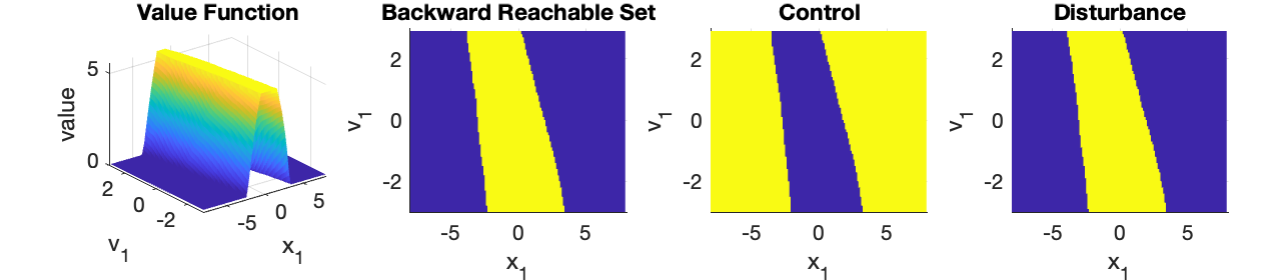}
    \caption{Visualization of the learned backward reachable set, with $[x_2,v_2,x_3,v_3]=[0.6,0.0,0.7,0.1]$. In the second subplot, the yellow area is the backward reachable set. In the third subplot, the yellow and blue areas correspond to the control actions $1$ and $-1$, respectively. In the fourth subplot, the yellow and blue areas correspond to the disturbance action $0.5$ and $-0.5$, respectively.}
    \label{fig:BRS}
\end{figure}

\begin{equation}\label{eq:2d_dynamics}
    \begin{bmatrix}
    x(t+1)\\    y(t+1)
    \end{bmatrix} = \begin{bmatrix}
    x(t) + \Delta t\cdot y(t)\\ y(t) + \Delta t\cdot (u(t) + d(t))
    \end{bmatrix}
\end{equation}
where $x(t)$ and $y(t)$ are the position and velocity, respectively. The $u(t)\in\{-1,1\}$ and $d(t)\in\{-0.5,0.5\}$ are the control action and disturbance at time $t\in\{0,1,2,\dots\}$, respectively. 

We see in Figure~\ref{fig:DI} that the reach-avoid set learned by Algorithm \ref{alg:DQN} is similar to the one computed by tabular Q-learning. An interesting observation is that, under different time-discount factors, the reach-avoid sets computed by tabular Q-learning are somewhat different. This could potentially be due to the numerical difficulty introduced by small values of $\gamma$. To be more specific, from the definition of the value function \eqref{eq:inf_horizon_reach_avoid_problem}, we see that the powers of small constant $\gamma\in[0,1)$ decays to 0 faster than those for a large parameter $\gamma\in[0,1)$, and this makes it difficult to distinguish zero from the product of the power of a small constant $\gamma$ and a bounded function term. Despite this numerical difficulty, it can be observed that both tabular Q-learning and Algorithm \ref{alg:DQN} learn similar reach-avoid sets under different time-discount factors $\gamma\in[0,1)$.

In addition, as we increase the CQL penalty $\lambda$, the reach-avoid set shrinks to a conservative subset of the true reach-avoid set in both tabular Q-learning and Algorithm~\ref{alg:DQN}. For the same $\lambda$, although subject to certain numerical errors, both of these methods yield similar results, which empirically supports Theorem \ref{thm:sandwidch}.

\subsection{6D Experiment for Reach-Avoid Game}\label{sec:6d experiment}
In this experiment, we consider a three-cart dynamical system, where the dynamics of each cart is modeled as a double integrator. The carts move along different axes. The first cart with the position $x_1(t)$ and velocity $v_1(t)$ moves vertically, while the second and the third carts move along the upper and the lower horizontal lines in Figure \ref{fig:6d_target_constraint}, respectively. 

The task is to drive the first cart towards the yellow region in Figure~\ref{fig:6d_target_constraint} while keeping the distance between every two cars at least $2$. The target and the constraint sets can then be formulated as $\mathcal{T} = \{x\in\mathbb{R}^6: 2-|x_1|>0 \}$ and $\mathcal{C} = \{x\in\mathbb{R}^6: \min\{(x_1-2)^2+x_2^2, (x_1+2)^2+x_3^2 \}-4>0\}$, respectively. The dynamics is
\begin{equation}\label{eq:experiment 6d dynamics}
    \begin{aligned}
    \begin{bmatrix}
    x_1(t+1)\\v_1(t+1)\\x_2(t+1)\\v_2(t+1)\\x_3(t+1)\\v_3(t+1)
    \end{bmatrix} = \begin{bmatrix}
    x_1(t) + \Delta t\cdot v_1(t)\\v_1(t) + \Delta t \cdot (u_t+d_t)\\x_2(t) + \Delta t\cdot v_2(t)\\ v_2(t) + 0.02\Delta t\\ x_3(t) + \Delta t\cdot v_3(t)\\v_3(t) +0.02 \Delta t 
    \end{bmatrix}
    \end{aligned}
\end{equation}
where $u_t\in\{-1,1\}$ and $d_t\in \{-0.5,0.5\}$ are the control input and disturbance, respectively. The time-integration constant $\Delta t $ is equal to $0.02 s$. The time-discount factor $\gamma$ is $0.99$.

Due to the curse of dimensionality, tabular Q learning explained in the previous experiment suffers numerical difficulties in this 6-dimensional experiment. In this subsection, we apply Algorithm~\ref{alg:DQN} with the same neural network architecture as in Section~\ref{sec:2d_experiment}. We plot the learned reach-avoid set by projecting it onto a 2-dimensional plane. In Figure~\ref{fig:DI_3}, we visualize the reach-avoid set as well as the effect of the CQL penalty $\lambda$ on the learned reach-avoid set. As the penalty $\lambda$ increases, the volume of the reach-avoid set shrinks while the empirical success rate improves. This suggests that a larger penalty $\lambda$ induces a more conservative estimation of the reach-avoid set and the policy. 

We sample 1000 initial states in each of the three learned reach-avoid set in Figure \ref{fig:DI_3}, and then collect the trajectories with maximum length 1000 time steps, under the control and disturbance policies induced by the learned neural network value function. For the three learned value functions, the portion of those 1000 trajectories under each of the three values of $\lambda$ that successfully reach the target set without violating the constraints are $85.4\%$, $88.4\%$, and $91.3\%$. In addition, we randomly sample 10,000 points in the state space, and observe that the ratios of points lying in the learned reach-avoid sets for $\lambda=0.01$ and $\lambda=0.05$ to those for $\lambda = 0.0$ are $99.7\%$ and $82.5\%$, respectively. However, note that the training epochs taken to learn the three value functions are 1290, 1650, and 1470, respectively.

In Figure~\ref{fig:policy}, we visualize the control and disturbance policies extracted from the Q function of the neural network corresponding to $\lambda=0.0$ in Figure~\ref{fig:DI_3}. The two policies are considered as to be reasonable because at most state, the control policy either drives the agent towards the target set or prevent the violation of the constraint. 
Notice that there are a few states, e.g., some points at the right lower corner of the disturbance law in Figure \ref{fig:policy}, where they do not complement to each other. One possible reason is that we learn a local optimal neural network Q function by minimizing the non-convex loss function \eqref{eq:cql_loss}.

\subsection{6D Experiments on Learning Viability Kernel and Backward Reachable Set}
In this subsection, we apply Algorithm~\ref{alg:DQN} to learn viability kernel and backward reachable set for the 6-dimensional dynamical system in Subsection \ref{sec:6d experiment}. The results empirically confirm Propositions~\ref{theorem:pure_safe} and \ref{theorem:pure_reach}. In the following two experiments, the same neural network architecture as in Section~\ref{sec:6d experiment} does not yield satisfactory results and we conjecture that this is due to the limited model capacity. As such, in this subsection, we increase the model capacity by adopting 4-layer neural networks with 256 neurons in each layer. We set the CQL penalty parameter to be $\lambda=0.0$.

We first consider learning the viability kernel where the constraint set is the same as the one in Subsection~\ref{sec:6d experiment}. The reward function is set to be $r(x) = -1$, for all $ x\in\mathbb{R}^n$. 
In the first subplot of Figure~\ref{fig:VK}, we visualize the learned value function. 
The value function is non-positive, which is predicted by Proposition \ref{theorem:pure_reach}. 
We visualize the learned viability kernel in the second subplot of Figure \ref{fig:VK}. 
We sampled $1000$ initial positions in the learned viability kernel and simulated a trajectory for each of them with $600$ steps. The portion of those sampled points that can be maintained inside the constraint set is~$79.2\%$.

Subsequently, we consider learning a backward reachable set by leveraging Proposition \ref{theorem:pure_reach}, where the constraint function is set to be a constant function with the value $1$. 
We consider the target set $\mathcal{T}=\{x\in\mathbb{R}^6:\min(2-|x_1|, 1-|x_2|,1-|x_3|)>0\}$. We visualize the learned value function and backward reachable set in Figure~\ref{fig:BRS}. As shown in Proposition \ref{theorem:pure_reach}, the value function is non-negative. 
We sampled $1000$ points in the obtained backward reachable set. The portion of initial states that can be driven towards the target set within 600 simulation steps~is~$70.0\%$.

It is observed that the empirical success rate of viability kernel is higher than backward reachable set. There are multiple reasons behind it. First, the simulation horizon is not long enough for an initial state to be driven towards the backward reachable set. In addition, for backward reachable set, a single failure in control may push the trajectory away from the target set in the future. However, for viability kernel, multiple failures in control are allowed if controls at the boundary of the viability kernel keep the trajectory inside the constraint set.

We remark here that the high-dimensional experiments are challenging because there is a lack of efficient methods to check the sufficiency of neural network model capacity and to solve the non-convex optimization problem \eqref{eq:cql_loss}, which are common problems in deep RL and on-going research directions in the deep RL community \cite{fan2020theoretical}.

\section{Conclusion}
\label{sec:conclusion}
In this paper, we investigated the infinite-horizon reach-avoid zero-sum game problem, in which the goal is to learn the reach-avoid set in the state space and an associated policy such that each state in the reach-avoid set could be driven towards a given target set while satisfying constraints. We designed a value function that offers several properties: 1) its super-zero level set coincides with the reach-avoid set and the induced Bellman Backup equation is a contraction mapping; 2) the value function is Lipschitz continuous under certain conditions; and 3) the value function can be adapted to compute the viability kernel and backward reachable set. 
We proposed to alleviate the curse of dimensionality issue by developing a deep RL algorithm. The provided theoretical and empirical results suggest that our method is able to learn reach-avoid sets reliably even with neural network approximation errors.


\section*{Appendix}







\begin{proof}[Proof of Theorem~\ref{thm:V>0}]
We first prove the sufficiency. Suppose that there exist an $\epsilon>0$ and a finite $T>0$ such that for all $\phi\in\Phi$, there is a sequence of control inputs $\mathbf{u}=\{u_t\}_{t=0}^\infty$ with the property $
     \min\{\min_{t=0,\dots,T} \gamma^t c(\xi_x^{\mathbf{u},\phi (\mathbf{u})}(t)), \gamma^T r(\xi_{x}^{\mathbf{u},\phi(\mathbf{u})}(T))\}>\epsilon,$ 
which implies that for all $\phi\in\Phi$.
We then have $V(x) \ge \epsilon >0$.

We show the necessity by a contrapositive statement. Suppose that for all $\epsilon>0$, there exists a disturbance strategy $\phi\in\Phi$ such that for all sequences of control inputs $\{u_t\}_{t=0}^\infty$ and all time $T\in\{0,1,\dots\}$, $
\min\{\min_{t=0,\dots, T}\gamma^t c(\xi_{x_0}^{\mathbf{u},\phi(\mathbf{u})}(t)),\gamma^T r(\xi_{x_0}^{\mathbf{u},\phi(\mathbf{u})}(T))\}\le \epsilon,$ 
which is equivalent to that for all $\epsilon>0$, there exists a disturbance strategy $\phi$ with the property
\begin{equation*}\small
\begin{aligned}
    \max_{\{u_t\}_{t=0}^\infty} \sup_{t=0,1,\dots}&\min\big\{ \min_{t=0,\dots, T} \gamma^t c(\xi_{x_0}^{\mathbf{u},\phi(\mathbf{u})}(t)) , \gamma^T r(\xi_{x_0}^{\mathbf{u},\phi(\mathbf{u})}(T))\big\}\le \epsilon.
\end{aligned}
\end{equation*}
In another words, for all $ \epsilon>0$,
\begin{equation*}\small
\begin{aligned}
    V(x_0) = \inf_{\phi} \max_{\{u_t\}_{t=0}^\infty}& \sup_{t=0,1,\dots} \min\big\{ \min_{t=0,\dots, T}\gamma^t c(\xi_{x_0}^{\mathbf{u},\phi(\mathbf{u})}(t)), \\&\gamma^T r(\xi_{x_0}^{\mathbf{u},\phi(\mathbf{u})}(T)) \big\}\le \epsilon,
\end{aligned}
\end{equation*}
which yields that
\begin{equation*}\small
\begin{aligned}
    V(x_0) = \inf_{\phi} \max_{\{u_t\}_{t=0}^\infty}& \sup_{t=0,1,\dots} \min\big\{ \min_{t=0,\dots, T}\gamma^t c(\xi_{x_0}^{\mathbf{u},\phi(\mathbf{u})}(t)), \\ &\gamma^T r(\xi_{x_0}^{\mathbf{u},\phi(\mathbf{u})}(T)) \big\}\le0.
\end{aligned}
\end{equation*}

Therefore, $V(x)>0$ if and only if the problem \eqref{eq:inf_horizon_reach_avoid_problem} is finite-time feasible and reachable for the initial state $x$.
\end{proof}
\begin{lemma}\label{lem: max_min_swap} Given a compact set $\mathcal{C}\subseteq \mathbb{R}$ and a function $a(\cdot):\mathbb{R}\to \mathbb{R}$, we have $\max_{c\in \mathcal{C} }  \min (a(c), b) = \min (\max_{c \in \mathcal{C} } a(c), b),\label{eq:max_min_swap_eq} $ and $\min_{c\in \mathcal{C} }  \min (a(c), b) = \min (\min_{c \in \mathcal{C} } a(c), b) $.
\end{lemma}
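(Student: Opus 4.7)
The plan is to establish both identities by elementary case analysis on how $b$ compares to the extreme value of $a$ over $\mathcal{C}$. Compactness of $\mathcal{C}$ together with (implicit) continuity of $a$ ensures that the max and min are attained, so I may work with achieved optimizers rather than suprema/infima; if $a$ is merely bounded, the same argument goes through with $\sup$/$\inf$ and a limiting step.

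For the first identity $\max_{c\in\mathcal{C}}\min(a(c),b)=\min(\max_{c\in\mathcal{C}}a(c),b)$, I would split into two cases. In Case 1, assume $\max_{c\in\mathcal{C}} a(c)\le b$. Then $a(c)\le b$ for every $c$, so $\min(a(c),b)=a(c)$, and taking the max over $c$ gives $\max_{c}\min(a(c),b)=\max_c a(c)=\min(\max_c a(c),b)$. In Case 2, assume $\max_{c\in\mathcal{C}} a(c)> b$, and pick $c^\star$ attaining the maximum, so $a(c^\star)>b$ and $\min(a(c^\star),b)=b$. Since $\min(a(c),b)\le b$ for every $c$, the outer maximum equals $b$, which coincides with $\min(\max_c a(c),b)=b$.

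For the second identity $\min_{c\in\mathcal{C}}\min(a(c),b)=\min(\min_{c\in\mathcal{C}} a(c),b)$, I would proceed by two complementary inequalities, which is cleaner than a case split. For $\le$: since $a(c)\ge \min_c a(c)$ for every $c$, we have $\min(a(c),b)\ge \min(\min_c a(c),b)$; taking the minimum over $c$ preserves the bound in the reverse direction, giving $\min_c\min(a(c),b)\ge\min(\min_c a(c),b)$. For $\ge$ (in the opposite direction): choose $c^\star$ attaining $\min_c a(c)$; then $\min_c\min(a(c),b)\le\min(a(c^\star),b)=\min(\min_c a(c),b)$. Combining the two inequalities yields equality.

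I do not anticipate a genuine obstacle; the statement is essentially the observation that capping by a constant $b$ commutes with $\max$ and $\min$. The only subtlety worth flagging is that attainment of the extrema is used to extract witnesses $c^\star$, which is exactly why the compactness hypothesis on $\mathcal{C}$ appears in the statement. No further tools beyond the definitions of $\min$, $\max$, and elementary monotonicity are required.
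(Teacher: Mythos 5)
Your proof is correct. Note that the paper states Lemma~\ref{lem: max_min_swap} without any proof at all, so there is no ``paper approach'' to compare against; your elementary case analysis (splitting on whether $\max_{c}a(c)\le b$ for the first identity, and the two-inequality argument for the second) is exactly the kind of justification the authors evidently regarded as routine, and it fills the gap cleanly. Your side remark about attainment is well taken: compactness of $\mathcal{C}$ alone does not guarantee that $\max_c a(c)$ is achieved unless $a$ is (at least upper semi-)continuous, a hypothesis the lemma omits; as you observe, the identities survive verbatim with $\sup$/$\inf$ in place of $\max$/$\min$, since in Case~2 one only needs a single witness $c^\star$ with $a(c^\star)>b$, which exists whenever $\sup_c a(c)>b$. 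In the context where the lemma is applied (the Bellman backup over the compact sets $\mathcal{U}$ and $\mathcal{D}$ with a Lipschitz value function), attainment does hold, so nothing is lost.
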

\begin{proof}[Proof of Theorem~\ref{thm:Bellman}]
We first show that $V(x)$ as defined in \eqref{eq:inf_horizon_reach_avoid_problem} satisfies the Bellman backup equation \eqref{eq:bellman_backup}, and then show that the proposed Bellman backup \eqref{eq:bellman_backup} is a contraction mapping.

By definition, we have
\begin{equation*}\small
\begin{aligned}
V(x_0) =& 
\max\Big\{ \min\{ r(x_0), c(x_0) \},  \inf_{\phi}\max_{\{u_t\}_{t=0}^\infty} \sup_{t=1,2,\dots} \min \{ \min\{\\&\gamma^t r(\xi_{x_0}^{\mathbf{u},\phi(\mathbf{u})}(t)), \min_{\tau=1,\dots, t} \gamma^{\tau} c(\xi_{x_0}^{\mathbf{u},\phi(\mathbf{u})}(\tau))\}, c(x_0) \}\Big\} .
\end{aligned}
\end{equation*}
By applying Lemma \ref{lem: max_min_swap}, the above equation implies
\begin{equation*}\small
\begin{aligned}
V(x_0)
= &\min\Big\{ c(x_0), \max\{ r(x_0) , \gamma \max_{u_0}\min_{d_0} V(f(x_0,u_0,d_0)) \} \Big\}.
\end{aligned}
\end{equation*}
\noindent In what follows, we show that \eqref{eq:bellman_backup} is a contraction mapping. 
It suffices to show that for all state $x$, we have $| B[V_1](x) - B[V_2](x) |\le \gamma \left\| V_1 - V_2 \right\|_{\infty}$, where $V_1$ and $V_2$ are two arbitrary bounded functions. One can write \begin{equation}\label{eq:proof_Bellman_backup1}\small
    \begin{aligned}
    &|B[V_1] (x) - B[V_2](x)|\\
    &\le | \max\{ r(x), \gamma \max_u\min_d V_1(f(x,u,d)) \}\\&\  \ - \max\{ r(x),\gamma \max_{u}\min_d V_2 (f(x,u,d)) \}|\\
    &\le | \gamma \max_u \min_d V_1(f(x,u,d))  - \gamma \max_u \min_d V_2(f(x,u,d))|
    \end{aligned}
\end{equation}
where the above inequalities follow from the fact that $\|\min\{a,b\} - \min\{a,c\}\|\le b-c, \forall a,b,c\in\mathbb{R}$.
Without loss of generality, we suppose that $\max_u\min_d V_1(f(x,u,d))\ge \max_u\min_d V_2(f(x,u,d))$ and let $u^*:=\argmax_u \min_d V_1(f(x,u,d))$. In addition, let $d^*:=\arg\min_d V_2(f(x,u^*,d))$. We then have
\begin{equation}\label{eq:proof_Bellman_backup2}\small
    \begin{aligned}
    |\gamma &\max_u\min_d V_1(f(x,u,d))-\gamma \max_u\min_d V_2 (f(x,u,d))|\\&\le \gamma |\min_d V_1( f(x,u^*,d) ) - \min_d V_2(f(x,u^*,d))|\\
    & \le \gamma | V_1(f(x,u^*,d^*)) - V_2(f(x,u^*,d^*)) |\\
    & \le \gamma \max_{x} |V_1(x) - V_2(x)|
    \le \gamma \|V_1 - V_2\|_{\infty}.
    \end{aligned}
\end{equation}
Combining \eqref{eq:proof_Bellman_backup1} and \eqref{eq:proof_Bellman_backup2}, we have $|B[V_1](x) - B[V_2](x)|\le \gamma \| V_1 - V_2\|_\infty,\ \textrm{for all }x$. 
This yields that
$\| B[V_1] - B[V_2] \|_\infty\le \gamma \| V_1 - V_2 \|_\infty$.
\end{proof}

\begin{proof}[Proof of Theorem~\ref{lemma:continuity}]
Define 
\begin{align}\small
    P(\mathbf{u},\mathbf{d},t,x) \coloneqq \min\{ \gamma^t r(\xi_x^{\mathbf{u},\mathbf{d}}(t)), \min_{\tau=0,...,t} \gamma^\tau c(\xi_x^{\mathbf{u},\mathbf{d}}(\tau)) \}.
\end{align}

Consider two initial states $x_1,x_2\in\mathbb{R}^n$.
Given $\epsilon>0$, there exists a non-anticipative strategy $\bar{\phi}$ such that
\begin{align}\small
    V(x_1) \geq \sup_\mathbf{u} \sup_t P(\mathbf{u},\bar{\phi}(\mathbf{u}), t,x_1) - \epsilon.
    \label{eq:pf_lma_cont_eq1}
\end{align}
We select a control sequence $\bar{\mathbf{u}}$ and time $\bar{t}$ such that
\begin{align}\small
    V(x_2) \leq  P(\bar{\mathbf{u}},\bar{\phi}(\bar{\mathbf{u}}), \bar{t},x_2) + \epsilon.
    \label{eq:pf_lma_cont_eq2}
\end{align}
Moreover, \eqref{eq:pf_lma_cont_eq1} implies that
\begin{align}\small
    V(x_1) \geq  P(\bar{\mathbf{u}},\bar{\phi}(\bar{\mathbf{u}}), \bar{t},x_1) - \epsilon.
    \label{eq:pf_lma_cont_eq3}
\end{align}
By combining \eqref{eq:pf_lma_cont_eq2} and \eqref{eq:pf_lma_cont_eq3}, we obtain
\begin{equation}\small
\begin{aligned}
    V(x_1)  - V(x_2 ) + &2\epsilon \geq P(\bar{\mathbf{u}},\bar{\phi}(\bar{\mathbf{u}}), \bar{t},x_1) - P(\bar{\mathbf{u}},\bar{\phi}(\bar{\mathbf{u}}), \bar{t},x_2) \\
    \geq & \min\{ \gamma^{\bar{t}} (r(\xi_{x_1}^{\bar{\mathbf{u}},\bar{\phi}(\bar{\mathbf{u}})}(\bar{t}))-r(\xi_{x_2}^{\bar{\mathbf{u}},\bar{\phi}(\bar{\mathbf{u}})}(\bar{t}))),\\
    &\min_{\tau=0,...,\bar{t}} \gamma^\tau (c(\xi_{x_1}^{\bar{\mathbf{u}},\bar{\phi}(\bar{\mathbf{u}})}(\bar{t})) - c(\xi_{x_2}^{\bar{\mathbf{u}},\bar{\phi}(\bar{\mathbf{u}})}(\bar{t}))) \}.
\end{aligned}
\label{eq:pf_lma_cont_eq4}
\end{equation}
The last inequality is due to the fact that for any finite values $a_i,b_i\in \mathbb{R}$ ($i=1,...,n$), we have
\begin{align}\small
\begin{split}
    \min\{a_1,...,a_n\}&-\min\{b_1,...,b_n\} \geq \min\{a_1-b_1,...,a_n-b_n\}.
\end{split}
\label{eq:pf_lma_cont_eq5}
\end{align}
According to the Lipscthiz-continuity assumptions, 
\begin{align}\small
\begin{split}
    & \|\xi_{x_1}^{\mathbf{u},\mathbf{d}}(t) - \xi_{x_2}^{\mathbf{u},\mathbf{d}}(t)  \| \leq L_f^t\|x_1-x_2\|, \\
    & \|c(\xi_{x_1}^{\mathbf{u},\mathbf{d}}(t)) - c(\xi_{x_2}^{\mathbf{u},\mathbf{d}}(t))  \| \leq L_c L_f^t\|x_1-x_2\|, \\
    & \|r(\xi_{x_1}^{\mathbf{u},\mathbf{d}}(t)) - r(\xi_{x_2}^{\mathbf{u},\mathbf{d}}(t))  \| \leq L_r L_f^t\|x_1-x_2\|.
\end{split}
\end{align}
Thus, it follows from \eqref{eq:pf_lma_cont_eq4} that
\begin{align}\small
\begin{split}
    V&(x_1)  - V(x_2 ) + 2\epsilon  \\
    &\geq-\max\{ L_r \gamma^{\bar{t}}L_f^{\bar{t}}, \max_{\tau=0,...,\bar{t}} L_c \gamma^{\tau}L_f^{\tau } \} \|x_1-x_2\|.
\end{split}
\label{eq:pf_lma_cont_eq55}
\end{align}
The condition $\gamma L_f <1$ implies that the coefficient in the right-hand side in \eqref{eq:pf_lma_cont_eq55} is bounded for all $\bar{t}$. As a result, $   V(x_1)  - V(x_2 ) + 2\epsilon \geq -C \|x_1-x_2\|
$
for some $C>0$. Similarly, we can show that $
    V(x_2)  - V(x_1 ) + 2\epsilon \geq -C \|x_1-x_2\|.
$
Combining these two inequalities, we prove Theorem \ref{lemma:continuity}.
\end{proof}

\begin{proof}[Proof of Proposition~\ref{theorem:pure_safe}]
The non-positivity of $V(x)$ is due to $r(x)=-1$, for all $x$. We then show that $V(x)=0$ if and only if $x\in\Omega (\bar{\mathcal{C}})$.

(Sufficiency) On one hand, suppose that for all $\phi\in\Phi$, there exists a sequence of control inputs $\mathbf{u}=\{u_t\}_{t=0}^\infty$ under which $c(\xi_{x}^{\mathbf{u},\phi(\mathbf{u})}(t))\ge 0$, $\forall t\in\{0,1,\dots\}$. Then, for all $\phi\in\Phi$,\vspace{-0.15cm}
\begin{equation*}\small
    \begin{aligned}
    V(x)& \ge \max_{\{u_t\}_{t=0}^\infty} \sup_{t=0,1,\dots} \min\big\{ -\gamma^t , \min_{\tau = 0,\dots, t} \gamma^\tau c(\xi_x^{\mathbf{u},\phi(\mathbf{u})}(\tau)) \big\}\\
    & \ge \max_{\{u_t\}_{t=0}^\infty} \min\big\{ -\gamma^t, \min_{\tau = 0,\dots,t}\gamma^\tau c(\xi_x^{\mathbf{u},\phi(\mathbf{u})}(\tau)) \big\}\big|_{t=\infty}=0.
    \end{aligned}\vspace{-0.15cm}
\end{equation*}
On the other hand, \vspace{-0.15cm}
\begin{equation*}\small
    \begin{aligned}
    V(x)&\le \inf_{ \phi } \max_{\{u_t\}_{t=0}^\infty}\sup_{t=0,1,\dots} \min\big\{ 0, \min_{\tau=0,\dots,t}\gamma ^{\tau}c(\xi_{x}^{\mathbf{u},\phi(\mathbf{u})}(\tau)) \big\}\\
    & \le \inf_\phi \max_{\{u_t\}_{t=0}^\infty }\sup_{t=0,1,\dots} 0= 0.
    \end{aligned}\vspace{-0.15cm}
\end{equation*}
Thus, $V(x)=0$ if for all $\phi\in\Phi$ there exists a sequence of control inputs $\{u_t\}_{t=0}^\infty$ such that $ c(\xi_x^{\mathbf{u},\phi(\mathbf{u})}(t))\ge 0$, $\forall t\in\{0,1,\dots\}$.\\
(Necessity) We approach the proof by a contrapositive statement. Suppose that for every $\phi\in\Phi$ and every sequence of control inputs $\mathbf{u}$, there exists a non-negative integer $t_{\phi,\mathbf{u}}$ such that $c(\xi_x^{\mathbf{u},\phi(\mathbf{u})}(t_{\phi,\mathbf{u}}))< 0$. Then, we have\vspace{-0.15cm}
\begin{equation*}\small
    \begin{aligned}
    V(x) 
    & \le \max_{\mathbf{u}} \sup_{t=t_{\phi,\mathbf{u}},t_{\phi,\mathbf{u}}+1,\dots} \min\{-\gamma^t , \gamma^{t_{\phi,\mathbf{u}}} c(\xi_x^{\mathbf{u},\phi(\mathbf{u})}(t_{\phi,\mathbf{u}}))\}\\
    & = \max_{\mathbf{u}} \gamma^{t_{\phi,\mathbf{u}}} c(\xi_x^{\mathbf{u},\phi(\mathbf{u})}(t_{\phi,\mathbf{u}}))<0.
    \end{aligned}\vspace{-0.15cm}
\end{equation*}
This completes the proof.
\end{proof}
\begin{proof}[Proof of Proposition~\ref{theorem:pure_reach}]
The non-negativity of $V(x)$ is due to $c(x)=1$, for all $x$. Subsequently, it suffices to show that $V(x)>0$ if and only if there exists an $\epsilon>0$ such that, for all $\phi\in\Phi$, there exists a sequence of control inputs that renders $\gamma^t \xi_x^{\mathbf{u},\phi(\mathbf{u})}(t)>\epsilon$ and $\gamma^t>\epsilon$, for some non-negative integer $t$.

(Sufficiency) Suppose that there exists $\epsilon>0$ such that for all $\phi\in\Phi$ there exists a sequence of control inputs $\mathbf{u} = \{u_t\}_{t=0}^\infty$, $\gamma^{T_{\phi,\mathbf{u}}} r(\xi_x^{\mathbf{u},\phi(\mathbf{u})}(T_{\phi,\mathbf{u}}))>\epsilon$, for some $T_{\phi,\mathbf{u}}\in\mathbb{Z}_+$. Then, we have\vspace{-0.15cm}
\begin{equation*}\small
    \begin{aligned}
    V(x)& = \inf_\phi \max_{\{u_t\}_{t=0}^\infty} \sup_{t=0,1,\dots} \min\big\{ \gamma^t r(\xi_x^{\mathbf{u},\phi(\mathbf{u})}(t)),\gamma^t \big\}\\
    & \ge \inf_{\phi} \max_{\{u_t\}_{t=0}^\infty} \epsilon\ge \epsilon.
    \end{aligned}\vspace{-0.15cm}
\end{equation*}
(Necessity) We approach the proof by a contrapositive statement. Suppose that for all $\epsilon>0$ there exist $\hat{\phi}$ such that for every sequence of control inputs $\mathbf{u} = \{u_t\}_{t=0}^\infty$, it holds that $\gamma^t r(\xi_{x}^{\mathbf{u},\hat{\phi}(\mathbf{u})}(t))\le \epsilon$ or $\gamma^t\le \epsilon$, $\forall t\in\mathbb{Z}_+$. Then, we have\vspace{-0.15cm}
\begin{equation*}\small
    \begin{aligned}
    V(x) & \le \max_{\{u_t\}_{t=0}^\infty} \sup_{t=0,1,\dots} \min\big\{ \gamma^t r(\xi_x^{\mathbf{u},\hat{\phi}(\mathbf{u})}(t)),\gamma^t \big\}\\
    & \le \sup_{\{u_t\}_{t=0}^\infty} \sup_{t=0,1,\dots}\epsilon\le \epsilon.
    \end{aligned}\vspace{-0.15cm}
\end{equation*}
Since the above ienquality holds true for all $\epsilon>0$, we have $V(x)\le 0$. This completes the proof.
\end{proof}

\begin{lemma}\label{lem:inequality_minimax}
Consider two functions $g(\cdot):\mathbb{R}\to\mathbb{R}$ and $\bar{g}(\cdot):\mathbb{R}\to\mathbb{R}$. Suppose that $g(a,b)\le \bar{g}(a,b)$, $\forall a,b$. Then, we have $\max_a\min_b g(a,b) \le \max_a \min_b \bar{g}(a,b)$.
\end{lemma}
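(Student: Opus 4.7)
The plan is to prove the lemma directly from the monotonicity of the $\min$ and $\max$ operators with respect to the pointwise ordering of functions. No induction or fixed-point machinery is required; the argument is purely order-theoretic.

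First, I would fix an arbitrary $a\in\mathbb{R}$. By the hypothesis $g(a,b)\le \bar{g}(a,b)$ for every $b\in\mathbb{R}$, so taking the minimum over $b$ on both sides preserves the inequality, yielding $\min_b g(a,b)\le \min_b \bar{g}(a,b)$. This step uses only the fact that the infimum of a pointwise-smaller function cannot exceed the infimum of a pointwise-larger function on the same domain.

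Second, since the inequality $\min_b g(a,b)\le \min_b \bar{g}(a,b)$ holds for every $a$, I would take the maximum over $a$ on both sides. By the same monotonicity principle applied to $\max$, this gives $\max_a \min_b g(a,b)\le \max_a \min_b \bar{g}(a,b)$, which is the desired conclusion.

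There is essentially no obstacle here; the only subtle point worth flagging is whether the extrema are attained. If they are not, one can replace each $\max/\min$ by $\sup/\inf$, and the two-step monotonicity argument goes through verbatim, since both $\inf$ and $\sup$ are order-preserving with respect to pointwise domination of functions.
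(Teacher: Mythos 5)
Your proof is correct, and it is the standard two-step monotonicity argument (pointwise domination is preserved first under $\min_b$, then under $\max_a$), which is exactly what the paper implicitly relies on — the paper states this lemma without proof. Your remark about replacing $\max/\min$ with $\sup/\inf$ when the extrema are not attained is a sensible precaution and does not change the argument.
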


\begin{proof}[Proof of Theorem~\ref{thm:sandwidch}]
Let $V^{(0)}(\cdot):\mathbb{R}^n\to \mathbb{R}$ be an arbitrary bounded function. Consider the value iteration:
\begin{equation}\small
\begin{aligned}
    V^{(k+1)}(x) = B[V^{(k)}](x),\  \forall k\in\mathbb{Z}_+.
\end{aligned}
\end{equation}
Let $V^{(0)}_{CQL}:\mathbb{R}^n\to \mathbb{R}$, and consider the CQL-based value iteration:
\begin{equation}\small
    V_{CQL}^{(k+1)}(x) = B_{CQL}[V_{CQL}^{(k)}](x),\ \forall k\in\mathbb{Z}_+.
\end{equation}
Suppose that $V_{CQL}^{(0)}(x) =V^{(0)}(x),\forall x\in\mathbb{R}^n$. It suffices to show that\vspace{-0.3cm}
\begin{equation}\small
    V^{(k)}(x)-\sum_{i=1}^k \gamma^i \lambda \le V_{CQL}^{(k)}(x) \le V^{(k)}(x)-\lambda,\ \forall x\in\mathbb{R}^n,
\end{equation}
because by taking the limit $k\to \infty$, we have\vspace{-0.2cm}
\begin{equation}\small
    \lim_{k\to \infty} V(x) - \frac{\lambda}{1-\gamma}\le V_{CQL}(x)\le V(x)-\lambda, \forall x\in\mathbb{R}^n.
\end{equation}
To this end, we first show the upper bound $V_{CQL}^{(k)}(x)\le V^{(k)}(x) - \lambda$ by induction. From \eqref{eq:CQL_Bellman_backup}, we have\vspace{-0.2cm}
\begin{equation}\label{eq:first_induction_cql_proof}\small
    V_{CQL}^{(1)}(x) = B[V^{(0)}](x)-\lambda\le V^{(1)}(x)-\lambda,\forall x\in\mathbb{R}^n.
\end{equation}
Therefore,\vspace{-0.1cm}
\begin{equation*}\small
\begin{aligned}
    V_{CQL}^{(2)}(x) \le &\min\big\{ c(x), \max\{r(x), \gamma \max_u\min_d V^{(1)}(f(x,u,d))\} \big\}-\lambda\\
    \le&V^{(2)}(x)-\lambda,
\end{aligned}
\end{equation*}
where we substitute \eqref{eq:first_induction_cql_proof} into the CQL Bellman backup \eqref{eq:CQL_Bellman_backup} of the $V_{CQL}^{(2)}(x)$ and also use the inequality in Lemma \ref{lem:inequality_minimax}. By induction, we have $V_{CQL}^{(k)}\le V^{(k)}-\lambda,\forall x\in\mathbb{R}^n.$ Similarly, we show the lower bound $V_{CQL}^{(k)}(x)\ge V^{(k)}(x) - \sum_{i=1}^k \gamma^i \lambda$ as follows. Since $V_{CQL}^{(1)}(x) \ge V^{(1)}-\lambda$, we have\vspace{-0.2cm}
\begin{equation*}\small
\begin{aligned}
    V_{CQL}^{(2)}(x)\ge& \min\big\{ c(x),\max\{r(x), \\ &\gamma \max_u \min_d (V^{(1)}(f(x,u,d))-\lambda)\} \big\}-\lambda\\
    \ge & \min\big\{c(x)-\gamma \lambda, \max \{ r(x) - \gamma \lambda, \\ &\gamma \max_u \min_d V^{(1)}(f(x,u,d)) - \gamma \lambda \}\big\}-\lambda\\
    =&V^{(2)}(x)-\gamma \lambda - \lambda.
\end{aligned}
\end{equation*}\vspace{-0.1cm}
By induction, we have $V^{(k)}-\sum_{i=1}^k \gamma^i \lambda \le V_{CQL}^{(k)}(x)$.
\end{proof}




\balance
\bibliographystyle{IEEEtran}

\bibliography{references}

\end{document}